\documentclass[conference]{IEEEtran}

\usepackage{amsthm}
\usepackage{cite}
\usepackage{amsmath}
\usepackage{amssymb}
\usepackage{amsfonts}
\usepackage{algorithmic}
\usepackage{graphicx,color,epsfig,rotating}
\usepackage{color}
\usepackage{diagbox}
\usepackage{nicematrix}
\usepackage{tikz}
\usepackage{mathtools}
\usepackage{multirow}

\usepackage{subcaption}

\long\def\comment#1{}

\newfont{\bbb}{msbm10 scaled 700}

\newfont{\bb}{msbm10 scaled 1100}

\newcommand{\bv}{{\bf b}}

\newcommand{\fv}{{\bf f}}

\newcommand{\nv}{{\bf n}}

\newcommand{\Dc}{{\cal D}}

\newcommand{\Mc}{{\cal M}}
\newcommand{\Nc}{{\cal N}}

\newcommand{\Rc}{{\cal R}}
\newcommand{\Sc}{{\cal S}}
\newcommand{\Tc}{{\cal T}}
\newcommand{\Uc}{{\cal U}}

\newcommand{\msf}{{\sf m}}

\newcommand{\qsf}{{\sf q}}
\newcommand{\rsf}{{\sf r}}
\newcommand{\ssf}{{\sf s}}

\newcommand{\Ksf}{{\sf K}}
\newcommand{\Lsf}{{\sf L}}

\newcommand{\Rsf}{{\sf R}}

\newcommand{\Tsf}{{\sf T}}
\newcommand{\Usf}{{\sf U}}

\newcommand{\Vsf}{{\sf V}}

\newcommand{\be}{\begin{equation}}
\newcommand{\ee}{\end{equation}}
\newcommand{\bea}{\begin{eqnarray}}
\newcommand{\eea}{\end{eqnarray}}
\newtheorem{thm}{Theorem}

\newtheorem{example}{Example}

\newtheorem{constraint}{Constraint}

\begin{document}

\title{On the Optimality of Hierarchical Secure Aggregation with Arbitrary Heterogeneous Data Assignment
}

\author{
\IEEEauthorblockN{%
Chenyi Sun\IEEEauthorrefmark{1},
Ziting Zhang\IEEEauthorrefmark{1},
Kai Wan\IEEEauthorrefmark{1},
Xiang Zhang\IEEEauthorrefmark{2}
}
\IEEEauthorblockA{\IEEEauthorrefmark{1}Huazhong University of Science and Technology, 430074  Wuhan, China,  \{chenyi\_sun, ziting\_zhang, kai\_wan\}@hust.edu.cn}%
 \IEEEauthorblockA{\IEEEauthorrefmark{2}Technische Universit\"at Berlin, 10623 Berlin, Germany,   xiang.zhang@tu-berlin.de}%

}

\maketitle

\begin{abstract}
% THIS PAPER IS ELIGIBLE FOR THE STUDENT PAPER AWARD. 
This paper studies the information theoretic secure aggregation problem in a three-layer hierarchical network with arbitrary heterogeneous data assignment, where clustered users communicate with an aggregation server through an intermediate layer of relays. We consider a more general  setting with arbitrary heterogeneous data assignment across users, where `arbitrary' means that the data assignment is given in advance and `heterogeneous' means that the users may hold different numbers of datasets. Each user locally computes the partially aggregated gradients as its input based on the assigned datasets and transmits masked input to its associated relay. The relays then forward the aggregated messages to the server, which aims to recover the sum of the gradients. In this process, while some users may drop out unpredictably, the server needs to correctly recover the desired aggregation from the surviving users. Moreover, the server or any relay may collude with a subset of users. We impose the following security constraints: (i) server security, requiring the server to learn only the sum of gradients without gaining any additional information about individual inputs; and (ii) relay security, ensuring that each relay learns nothing about users’ inputs. Under these constraints, we propose an aggregation scheme that guarantees information theoretic security and achieves the optimal two-layer communication loads. %capacity region.
\end{abstract}

\begin{IEEEkeywords}
Secure aggregation, hierarchical network, arbitrary data assignment
\end{IEEEkeywords}

\section{Introduction}

% Distributed machine learning has emerged as a fundamental paradigm for large-scale model training \cite{dean2012large}. In this formulation, data volume and model complexity exceed the capability of a single computing node; thus, the distributed computation system partitions the training workload and assigns a portion to each of the distributed users. Multiple users collaboratively attribute to the learning process by computing partial gradients based on their local data and transmitting them to the server, which updates the global model parameter accordingly. 
In modern machine learning, datasets and model parameters often exceed the capacity of a single device, motivating distributed machine learning, which partitions training across multiple nodes to enable efficient learning~\cite{dean2012large,dean2008mapreduce,zaharia2010spark}. However, distributed machine learning also faces challenges such as high communication cost and straggler effect.
% In modern machine learning, the scale of both datasets and model parameters often exceeds the capacity of a single computing device. Distributed machine learning addresses this challenge by partitioning the training workload across multiple nodes, enabling efficient training on massive datasets and complex models\cite{dean2012large,dean2008mapreduce,zaharia2010spark}.
% % where each node processes a subset of the datasets and computes intermediate results such as local gradients, then a central server aggregates these results to update the global model, 
% While distributed machine learning accelerates model training on large-scale datasets through parallel computation, it also faces several challenges including high communication costs  and the impact of stragglers.
% First, some users may experience unpredictable delays or failures, commonly referred to as stragglers, which can slow down the aggregation of updates. Second, as the number of users grows, communication between the server and users becomes a critical bottleneck, limiting scalability and efficiency. 
Motivated by these issues, coding techniques were introduced~\cite{tandon2017gradient,lee2017speeding,bitar2020stochastic,wan2021distributed,wan2021tradeoff}.

Most existing works on coded distributed learning assume either coded or uncoded data assignment. Coded data assignment relies on centralized dataset preprocessing, which incurs substantial overhead for large-scale or iterative computation. In contrast, uncoded data assignment is typically restricted to structured assignment, such as cyclic or repetitive replication, and therefore fails to support arbitrary data assignment in heterogeneous environments. In real-world systems, however, data assignment is often arbitrary rather than carefully designed. A notable example is Mixture-of-Experts architectures\cite{lin2024moe,li2025uni}, where expert specialization and hardware constraints lead to inherently asymmetric data placement. To address such arbitrary data assignment, the heterogeneous gradient coding formulation was introduced in~\cite{jahani2021optimal}, where a universal gradient coding scheme was proposed. Under this formulation, the communication cost is defined as the per-node transmission load, which is identical across all nodes. In the presence of $s$ stragglers and $a$ adversarial nodes, the normalized communication cost is $\frac{1}{r - s - 2a}$, implying that system performance is bottlenecked by the least-replicated data partition.
While traditional gradient coding focuses on straggler resilience and communication efficiency, the growing sensitivity of local data has made security a crucial concern, leading to the proposal of secure aggregation~\cite{zhao2022information,so2022lightsecagg,zhang2025secure,wan2024capacity,wan2024information}.

Most existing gradient coding protocols are designed for the conventional star network, where a central server connects to all users. However, as the cluster size grows, this topology suffers from bandwidth bottlenecks and heterogeneous network conditions. Consequently, hierarchical architectures~\cite{reisizadeh2019tree,prakash2020hierarchical,zhang2024optimal,zhang2025fundamental,xu2025hierarchical,gholami2025hierarchicalgradientcodingoptimal,li2025collusion} have been proposed.
A three-layer hierarchical network was considered in \cite{zhang2024optimal}, consisting of a server, $\Usf$ relays and $\Usf\Vsf$ users, with a symmetric architecture where each relay serves the same number of users. Secure aggregation is performed by each user transmitting its encoded input to its associated relay, followed by each relay sending messages to the server based on the received information. Notably, two types of security constraints are considered: (i) server security, requiring the server to learn only the sum of gradients without gaining any additional information about individual inputs; and (ii) relay security, ensuring that each relay learns nothing about users’ inputs. With user collusion taken into consideration, the security constraints must hold even if each relay or the server colludes with up to $\Tsf$ users. 
% The communication rates of the relay-to-server layer and the intra-cluster layer, denoted by $\Rsf_1$ and $\Rsf_2$, are defined as the transmission load over all relay–server and user–relay links. 
The communication rates of the relay-to-server layer and the intra-cluster layer, denoted by $\Rsf_1$ and $\Rsf_2$, are defined as the per-link transmission load, and are identical across all relay–server and user–relay links, respectively.The capacity region is given by $\{(\Rsf_1,\Rsf_2):\Rsf_1\geq 1, \Rsf_2\geq 1\}$ when $\sf T<(U-1)V$; when $\sf T\geq (U-1)V$, the scheme is infeasible. Note that this work focuses on secure aggregation in federated learning without user dropouts, where each user holds a proprietary dataset without redundancy. This setting differs from the arbitrary heterogeneous data assignment considered in our model, which induces arbitrary redundancy in users' storage.

\begin{figure}[ht]
  \centering
  \includegraphics[width=0.47\textwidth]{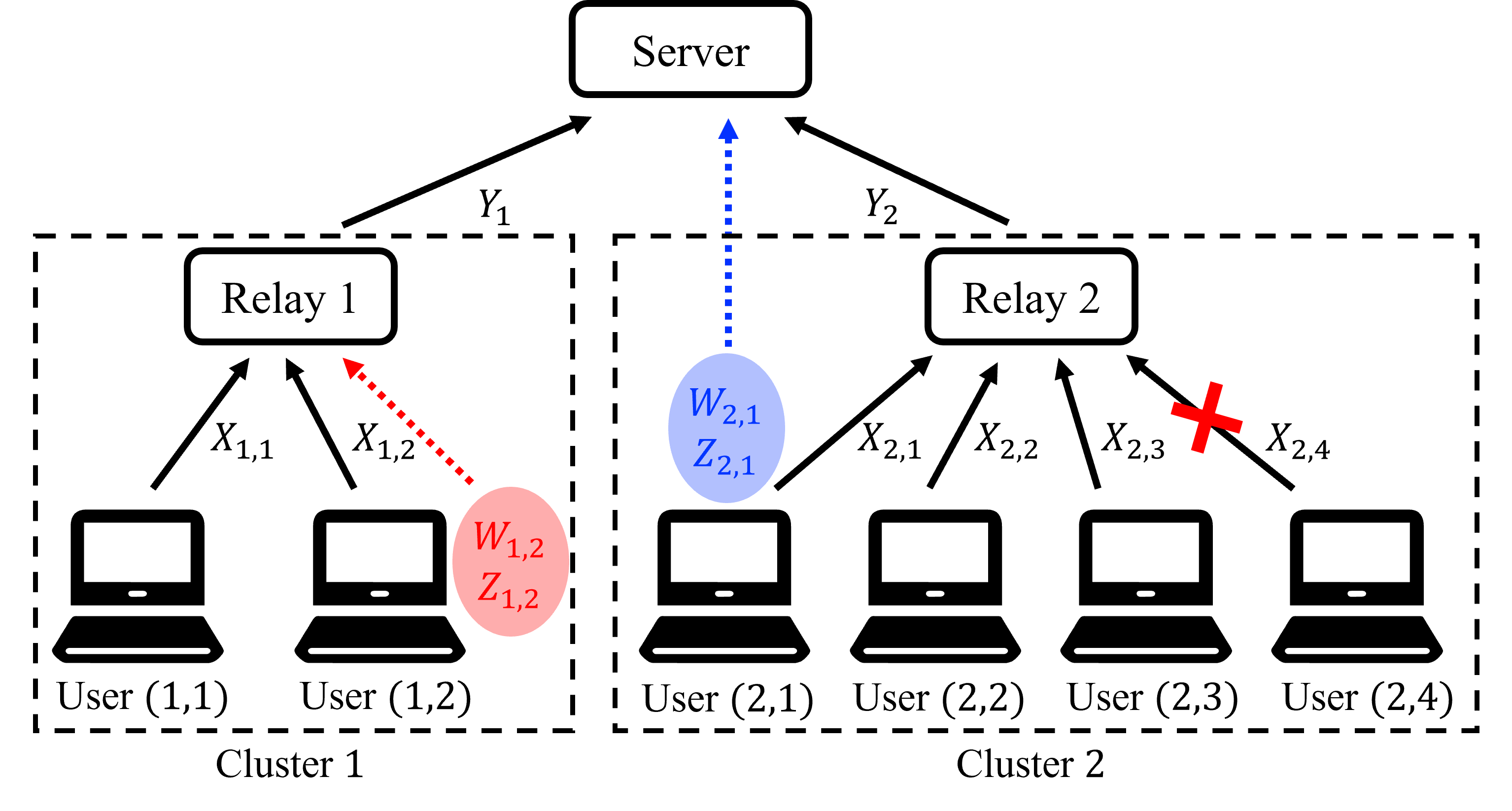}
  \caption{Hierarchical secure aggregation problem $(\Usf,\Vsf_1,\Vsf_2)=(2,2,4)$ with user $(2,4)$ as a straggler, relay 1 colludes with user $(1,2)$ and the server colludes with user $(2,1)$.}
  \label{fig: hierarchical secure aggregation}
\end{figure}

% In the recent work~\cite{gholami2025hierarchicalgradientcodingoptimal}, the hierarchical gradient coding problem with arbitrary data assignment was introduced.
% In contrast to ~\cite{zhang2024optimal}, the work in~\cite{gholami2025hierarchicalgradientcodingoptimal} studies an asymmetric architecture, where the number of users served by each relay is heterogeneous, i.e. relay $u$ is associated with $\Vsf_u$ users that form a cluster.
\vspace{-1.5mm}
In contrast to~\cite{zhang2024optimal}, the recent work~\cite{gholami2025hierarchicalgradientcodingoptimal} introduces the hierarchical gradient coding problem with arbitrary data assignment and considers an asymmetric architecture, where relay $u$ serves a heterogeneous cluster of $\Vsf_u$ users.
% The communication rates  $(\Rsf_1,\Rsf_2^{(1)},\ldots,\Rsf_2^{(\Usf)})$ are defined as the transmission load among all relay and among users in cluster $u$ for all $u\in[\Usf]$, respectively.
The communication rates $(\Rsf_1,\Rsf_2^{(1)},\ldots,\Rsf_2^{(\Usf)})$ are defined as the per-node transmission loads, where $\Rsf_1$ is identical across all relays, and $\Rsf_2^{(u)}$ is identical across all users in cluster $u$ for each $u \in [\Usf]$.
Without considering security constraints, the capacity region was characterized as $\Big\{(\Rsf_1,\Rsf_2^{(u)}):\Rsf_1\geq \frac{1}{\rsf_1-\ssf_1}, \Rsf_2^{(u)}\geq \frac{1}{(\rsf_1-\ssf_1)(\rsf_2^{(u)}-\ssf_2^{(u)})}\Big\}$, for each $u\in[\Usf]$, where $\rsf_1$ (resp. $\rsf_2^{(u)}$) and $\ssf_1$ (resp. $\ssf_2^{(u)}$) represent the minimum replication factor of data partitions and the number of stragglers among all relays (resp. among all users in cluster $u$). Regarding the security constraints, their extended scheme achieves relay security only when $\ssf_1+1\leq\rsf_1\leq\Usf-1$ and $\rsf_2^{(u)}=\ssf_2^{(u)}+1$, while the communication loads remain the same.

\paragraph*{Main Contribution}
Building upon~\cite{gholami2025hierarchicalgradientcodingoptimal}, this paper studies a hierarchical secure aggregation problem with arbitrary data assignment, guaranteeing both server and relay security. In each cluster $u$, up to $\Tsf^{(u)} \leq \Vsf_u - \rsf_2^{(u)}$ users may collude with the relays or the server, while up to $\ssf^{(u)}_2 < \rsf^{(u)}_2$ users may drop out.
We propose a secure linear-coding-based scheme that achieves the optimal two-layer communication loads in the presence of stragglers and colluding users, without incurring any additional communication overhead relative to~\cite{gholami2025hierarchicalgradientcodingoptimal}.
% of both relay-to-server layer and intra-cluster layer even in the presence of stragglers and collusion. 
%In contrast to \cite{gholami2025hierarchicalgradientcodingoptimal}, our scheme additionally considers collusion and ensures the security of both the relays and the server.   
    % providing robustness against a predetermined number of straggles in each cluster. 
    % % In intra-cluster layer, each user computes the partial gradient based on its dataset and sends masked gradient to the associated relay. In relay-to-server layer, each relay computes a message according to messages from its assigned users. 
    % Our scheme characterizes the optimal rate region, with the minimal communication rate in the intra-cluster layers and the relay-to-server layer achieving simultaneously.
    % \item We ensure the secure aggregation of users' gradients even in the presence of user collusion. Given the collusion threshold in each cluster, our proposed scheme guarantees both server security and relay security by introducing additional source keys.

\paragraph*{Notation Convention} Calligraphic symbols represent sets. Vectors and matrices are denoted using bold symbols. System parameters are represented in sans-serif font. $|\cdot|$ defines the cardinality of a set. $\mathbb{F}_{\qsf}$ denotes a finite field with order ${\qsf}$.  The notation $[a:b]$ represents the range $\{a,a+1,\dots,b\}$; $[n]$ denotes the set $[1:n]$.
% ; $\ev_{n,i}$ represents the vertical $n$-dimensional unit vector whose $i$-th entry is $1$ and the remaining entries are $0$. 
%$\mathbf{A}^{-1}$ denotes the inverse of $\mathbf{A}$. 
 Let $\mathbf{0}_{m\times n}$ and $\mathbf{1}_{m\times n}$ represent the all-zero and all-one matrices of dimension $m\times n$, respectively. For a set $\Sc$, we denote the $i^{\text{th}}$ smallest element by $\Sc(i)$. 
We let $X_\Sc = \{X_i\}_{i\in \Sc}$ for any index set $\Sc$. $\mathbf{M}(\Sc,.)$ and $\mathbf{M}(.,\Sc)$ 
represent the sub-matrices of $\mathbf{M}$  composed of the
rows  with indices in $\Sc$ and the   columns  with indices in $\Sc$, respectively. The matrix $[\mathbf{A};\mathbf{B}]$ is expressed in a format similar to MATLAB, equivalent to $\begin{bmatrix}
     \mathbf{A}\\ 
     \mathbf{B}
 \end{bmatrix}$.

\section{System Model}\label{sec: system model}

We consider the hierarchical secure aggregation problem with arbitrary data assignment,  as illustrated in Fig.~\ref{fig: hierarchical secure aggregation}. The training dataset $\Dc$ is partitioned into $\Ksf$ non-overlapping and equal-length datasets, denoted as $\Dc=\{\Dc_1, \ldots, \Dc_\Ksf\}$, which are assigned to users in an arbitrary manner.
% ; each partial gradient $W_k$ is computed from the corresponding dataset $\Dc_k$
The server aims to recover $\sum_{k\in[\Ksf]}W_k$, where each partial gradient $W_k$ is computed by dataset $\Dc_k$ and contains $\Lsf$ i.i.d.  uniform  symbols over $\mathbb{F}_{\qsf}$. The network consists of three layers: an aggregation server, an intermediate layer of $\Usf$ relays, and distributed users at the bottom layer. The server is connected to all the relays, and the $u^{\text{th}}$ relay is connected to a disjoint subset of $\Vsf_u$ users. Specifically, the $v^{\text{th}}$ user of the $u^{\text{th}}$ cluster is labeled as user $(u,v) \in [\Usf]\times[\Vsf_u]$. We assume that all communication links are orthogonal and noiseless. 
% and is independent of each other

Regarding the data assignment, the set of datasets assigned to user $(u,v)$ is denoted by $\Dc^{(u,v)}$, where $\Dc^{(u,v)}\neq \emptyset$. Accordingly, the datasets assigned to cluster $u$ is $\Dc^{(u)} \triangleq\bigcup_{v\in[\Vsf_u]}\Dc^{(u,v)}$; if dataset $\Dc_k$ is included in $\Dc^{(u)}$, it is assigned to at least one user in cluster $u$.
% , then  $u$ is considered to possess this dataset. 
Let $\Nc_k$ (resp. $\Nc_k^{(u)}$) represent the set of clusters (resp. the set of users in cluster $u$) assigned dataset $\Dc_k$. The cardinality of this set $\rsf_{1,k}\triangleq \left|\Nc_k\right|$ (resp. $\rsf_{2,k}^{(u)}\triangleq \Big|\Nc_k^{(u)}\Big|$) indicates the replication factor of $\Dc_k$ among clusters (resp. in cluster $u$). Define $\rsf_1\triangleq \min_{k\in[\Ksf]}{\rsf_{1,k}}$, $\rsf_2^{(u)}\triangleq \min_{k:\Dc_k\in\Dc^{(u)}}{\rsf_{2,k}^{(u)}},$ where $\rsf_1$ (resp. $\rsf_2^{(u)}$) represents the minimum replication factor among clusters (resp. in cluster $u$). 
% , and $\rsf_2^{(u)}$ represents the minimum replication factor in cluster $u$.

% Based on the assigned datasets $\Dc^{(u,v)}$, user $(u,v)$ computes $W_{u,v}$ as its input.
% To ensure secure aggregation, each user also stores an individual key $Z_{u,v}$, which is generated from the source key $Z_\Sigma$, i.e.,

Based on its assigned datasets $\Dc^{(u,v)}$, user $(u,v)$ computes $W_{u,v}$ and stores an independent key $Z_{u,v}$ generated from the source key $Z_\Sigma$ containing $H(Z_\Sigma)=\Lsf_{Z_\Sigma}$ symbols to ensure secure aggregation, i.e.,
% \vspace{-0.8mm}
\begin{align}
   &H\big(\left\{Z_{u,v}\right\}_{(u,v)\in[\Usf]\times[\Vsf_u]}\big|Z_\Sigma\big)=0,\label{eq: key generation}\\
    &H(\left\{Z_{u,v}\right\}_{(u,v)\in[\Usf]\times[\Vsf_u]},\left\{W_{u,v}\right\}_{(u,v)\in[\Usf]\times[\Vsf_u]})\nonumber\\
    =&H(\left\{Z_{u,v}\right\}_{(u,v)\in[\Usf]\times[\Vsf_u]})+H(\left\{W_{u,v}\right\}_{(u,v)\in[\Usf]\times[\Vsf_u]}).\label{the keys are independent}
\end{align}

% The keys $\left\{Z_{u,v}\right\}_{(u,v)\in[\Usf]\times[\Vsf_u]}$ are independent of the inputs $\left\{W_{u,v}\right\}_{(u,v)\in[\Usf]\times[\Vsf_u]}$, 
% \begin{align}
% \label{the keys are independent}
%     &H(\left\{Z_{u,v}\right\}_{(u,v)\in[\Usf]\times[\Vsf_u]},\left\{W_{u,v}\right\}_{(u,v)\in[\Usf]\times[\Vsf_u]})\nonumber\\
%     =&H(\left\{Z_{u,v}\right\}_{(u,v)\in[\Usf]\times[\Vsf_u]})+H(\left\{W_{u,v}\right\}_{(u,v)\in[\Usf]\times[\Vsf_u]}).
% \end{align}

% The keys $\Zc \triangleq \left\{Z_{u,v}\right\}_{(u,v)\in[\Usf]\times[\Vsf_u]}$ are independent of the inputs $\Wc \triangleq \left\{W_{u,v}\right\}_{(u,v)\in[\Usf]\times[\Vsf_u]}$,  
% \begin{equation}
%     H(\Zc,\Wc)=H(\Zc)+H(\Wc).
% \end{equation}

Over the first hop, user $(u,v)$ sends a masked input $X_{u,v}$ containing $H(X_{u,v}) = \Lsf^{(u)}_X$ symbols to its associated relay $u$, as a function of $W_{u,v}$ and $Z_{u.v}$. Among $\Vsf_u$ users inside cluster $u$, we consider at most $\ssf_2^{(u)}<\rsf_2^{(u)}$ stragglers.
% and the set of non-straggling users is denotes as $\Vc_r^{(u)}$, where $\ssf_2^{(u)}<\rsf_2^{(u)}$ and $\Vc_r^{(u)}\subseteq [\Vsf_u]$. 
Over the second hop, relay $u$ sends a message $Y_u$ containing $H(Y_u) = \Lsf_Y$ symbols to the server, based on received messages from the set of surviving users $\Mc_u$, i.e., 
\begin{align}
    &H\big(X_{u,v}|W_{u,v}, Z_{u,v}\big) = 0, \ \forall (u,v)\in[\Usf]\times[\Vsf_u],\label{eq: user message generation}\\
    &H\big(Y_u \big|\left\{X_{u,v}\right\}_{v\in\Mc_u}\big)=0, \  \forall u\in[\Usf].\label{eq: relay message generation}
\end{align}

{\it Decodability.} All relays are considered reliable (i.e.,  $\ssf_1=0$). After receiving the messages from all relays, the server should compute the desired aggregation of partial gradients, i.e.,
\begin{equation}\label{eq: decodability}
    H\Bigg(\sum_{k\in[\Ksf]}W_k\Big|\left\{Y_{u}\right\}_{u\in[\Usf]}\Bigg)=0.
\end{equation}
% \vspace{-0.1cm}
For any colluding user set $\Tc^{(u)} \subseteq [\Vsf_u]$ in cluster $u$ with $|\Tc^{(u)}|\leq \Tsf^{(u)}$, denote $\Tc \triangleq \bigcup_{u\in[\Usf]}\big\{(u,v):v\in\Tc^{(u)}\big\}$ as the set of all colluding users. Secure aggregation ensures that, even if the server or any relay colludes with up to $\Tsf^{(u)}$ users in each cluster $u \in [\Usf]$, the two security constraints are satisfied. 
% Secure aggregation guarantees that even if the server or any relay colludes with a subset of users, where the number of colluding users in each cluster $u$ is at most $\Tsf^{(u)}$
% , the following two security constraints are satisfied.\footnote{ Note that the server or each relay may collude with different sets of users; a relay can collude with users both in its cluster and across other clusters.}
% Secure aggregation guarantees that even if the server or any relay colludes with a subset of up to $\Tsf^{(u)}$ users in cluster $u$, the following two security constraints are satisfied
% the server can only compute the aggregated gradients $\sum_{k\in[\Ksf]}W_k$, without gaining any additional information about the users' inputs $\{W_k\}_{k\in[\Ksf]}$. Furthermore, each relay cannot infer any information about the users' inputs, even under such collusion.
% $\Tsf_{\isf\nsf}$

{\it Server Security.} 
For any colluding user set $\Tc^{(u)} \subseteq [\Vsf_u]$ in cluster $u$ with $|\Tc^{(u)}|\leq \Tsf^{(u)}$, the server cannot obtain any other information about  users’ inputs beyond the aggregation,
\begin{equation}\label{eq: server security}
    I\Big(\{Y_u\}_{u\in[\Usf]};\{W_k\}_{k\in[\Ksf]}\Big|\sum_{k\in[\Ksf]}W_k,\{W_{\Tc},Z_{\Tc}\}\Big)=0.
    % I\Big(\{Y_u\}_{u\in[\Usf]};\{W_k\}_{k\in[\Ksf]}\Big|\sum_{k\in[\Ksf]}W_k,\{W_{u,v},Z_{u,v}\}_{(u,v)\in\Tc}\Big)=0,
\end{equation}
% Recall that the notation $\{W_{\Tc},Z_{\Tc}\}$ represents the set $\{W_{u,v},Z_{u,v}\}_{(u,v)\in\Tc}$.

% \vspace{-0.1cm}

{\it Relay Security.} 
For any colluding user set $\Tc^{(i)} \subseteq  [\Vsf_i]$ in cluster $i$ with $|\Tc^{(i)}|\leq \Tsf^{(i)}$, where $i\in[\Usf]$, the  relay $u$  cannot infer any information about  users’ inputs,
% , nothing about the partial gradients $\{W_k\}_{k\in[\Ksf]}$ can be inferred through its received messages, even if it colludes with any set of users $\Tc^{(i)}$ in cluster $i$, where $|\Tc^{(i)}|\leq \Tsf^{(i)}$ and $i\in [\Usf]$. We have
\begin{equation}\label{eq: relay security}
    I\Big(\{X_{u,v}\}_{v\in[\Vsf_u]};\{W_k\}_{k\in[\Ksf]}\Big|\{W_{\Tc},Z_{\Tc}\}\Big)=0, \forall u\in [\Usf].
    % I\Big(\{X_{u,v}\}_{v\in[\Vsf_u]};\{W_k\}_{k\in[\Ksf]}\Big|\{W_{i,j},Z_{i,j}\}_{(i,j)\in\Tc}\Big)=0.
\end{equation}
% Note that we consider a strong relay security in~\eqref{eq: relay security}, where stragglers in cluster $u$ are not actually unresponsive but are too slow in their transmissions; the relay may receive all messages from its associated users.

The communication rate of the relay-to-server layer, the intra-cluster layer for each $u\in[\Usf]$, and the source key rate $\Rsf_{Z_\Sigma}$ are defined as follows,
% \begin{align}
%     \Rsf_1=\max_{u\in [\Usf]} \frac{|Y_u|}{\Lsf},\ \Rsf_2^{(u)}=\max_{v\in [\Vsf_u]}  \frac{|X_{u,v}|}{\Lsf},\ \Rsf_{Z_\Sigma}=\frac{|Z_\Sigma|}{\Lsf}.
% \end{align}
\begin{align}
    \Rsf_1= \frac{\Lsf_Y}{\Lsf},\ \Rsf_2^{(u)}= \frac{\Lsf^{(u)}_X}{\Lsf},\ \Rsf_{Z_\Sigma}=\frac{\Lsf_{Z_\Sigma}}{\Lsf}.
\end{align}
% where $|\cdot|$ represents  the number of symbols inside. 

{\it Objective.} A rate tuple $\Big(\Rsf_1, \Rsf_2^{(1)},\ldots,\Rsf_2^{(\Usf)}\Big)$ is said to be achievable if there exists a secure aggregation scheme satisfying the key generation constraints in~\eqref{eq: key generation} and~\eqref{the keys are independent}, the encodability constraints in~\eqref{eq: user message generation} and~\eqref{eq: relay message generation}, the decodability constraint in~\eqref{eq: decodability}, and the security constraints in~\eqref{eq: server security} and~\eqref{eq: relay security}. The objective is to characterize the capacity region (i.e., the closure of the set of all achievable rate tuples), denoted by $\Rc^{\star}$.

{\bf Existing converse and achievable bounds.} A converse bound for the hierarchical aggregation problem with arbitrary data assignment, considering user dropouts but without secure constraints, was proposed in~\cite{gholami2025hierarchicalgradientcodingoptimal}, which can be directly applied to our considered problem.
\begin{thm}[\cite{gholami2025hierarchicalgradientcodingoptimal}]\label{thm: related result}
  For the hierarchical aggregation problem with arbitrary data assignment consisting of a server and $\Usf$ relays with each connected to a disjoint subset of $\Vsf_u$ users, where there may be $\ssf_2^{(u)}<\rsf_2^{(u)}$ stragglers in each cluster, the minimum communication loads are characterized by
    \begin{equation}
        \mathcal{R}^{\star}=\left\{
        \left(\Rsf_1,\Rsf_2^{(u)}\right):\Rsf_1\geq \frac{1}{\msf_1}, \Rsf_2^{(u)}\geq \frac{1}{\msf_1\msf_2^{(u)}},  \forall u\in[\Usf]
        \right\},
        \label{eq:capacity region}
    \end{equation}
    where $\msf_1 \triangleq \rsf_1$, $\msf_2^{(u)} \triangleq \rsf_2^{(u)}-\ssf_2^{(u)}$.
\end{thm}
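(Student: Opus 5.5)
To prove Theorem~\ref{thm: related result}, I would establish two directions for the non-secure problem: a converse showing that every achievable tuple satisfies $\Rsf_1\geq 1/\msf_1$ and $\Rsf_2^{(u)}\geq 1/(\msf_1\msf_2^{(u)})$, and an achievability scheme attaining the corner point. Only the converse is needed for our secure setting, since adding security constraints can only shrink the achievable region. For the converse I would use a cut-set style argument based on the least-replicated dataset.

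For the relay-to-server bound, pick $k^\star$ with $\rsf_{1,k^\star}=\msf_1$. Fix all $W_k$, $k\neq k^\star$; then the server must recover $W_{k^\star}$, which has entropy $\Lsf$. Only the $\msf_1$ relays in $\Nc_{k^\star}$ have messages that can depend on $W_{k^\star}$. Hence
\begin{equation*}
\Lsf \leq H\big(\{Y_u\}_{u\in\Nc_{k^\star}}\big) \leq \msf_1 \Lsf_Y ,
\end{equation*}
which gives $\Rsf_1\geq 1/\msf_1$.

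A sharper per-relay argument also seems needed, because the rate is symmetric across relays. I would use a symmetrization or averaging over which relays carry the information; since all $Y_u$ have equal size $\Lsf_Y$, the bound above already suffices.

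For the intra-cluster bound in cluster $u$, I would argue that relay $u$ must in effect forward an $\Lsf/\msf_1$-size share of $W_{k^\star}$. The steps are:
\begin{itemize}
\item Fix a dataset $k$ in $\Dc^{(u)}$ with $\rsf_{2,k}^{(u)}=\rsf_2^{(u)}$.
\item Let the $\ssf_2^{(u)}$ stragglers be chosen inside $\Nc_k^{(u)}$, so that only $\msf_2^{(u)}$ users holding $W_k$ survive.
\item Observe that $Y_u$ then depends on $W_k$ only through these $\msf_2^{(u)}$ messages $X_{u,v}$, so that $\msf_2^{(u)}\Lsf_X^{(u)}\geq I(Y_u;W_k\,|\,\text{rest})$.
\end{itemize}

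The main obstacle is lower bounding $I(Y_u;W_k\,|\,\text{rest})$ by $\Lsf/\msf_1$. A single cluster need not carry $\Lsf/\msf_1$ of $W_k$ if $k$'s global replication exceeds $\msf_1$. I would handle this by combining the two cuts: use the dataset $k$ minimizing within cluster $u$, but argue over all clusters in $\Nc_k$. By robustness to the straggler pattern chosen independently in each cluster, each of the $\rsf_{1,k}$ relays' messages must jointly decode $W_k$, giving $\sum_{u'\in\Nc_k} I(Y_{u'};W_k\,|\,\cdot)\geq\Lsf$.

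I would then try to exploit that the scheme must work for every dataset simultaneously with fixed $\Lsf_X^{(u)}$. This is where I expect the argument to need the structure of Gholami et al. A genuine subtlety arises when $\rsf_{1,k}>\msf_1$, and I would attempt a dimension-counting argument for linear schemes first.

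For achievability, I would use the universal gradient-coding polynomial/MDS construction nested across the two layers:
\begin{itemize}
\item split each $W_k$ into $\msf_1$ pieces;
\item encode across relays so that any relay-level combination decodes, each relay sending $\Lsf/\msf_1$;
\item split further into $\msf_2^{(u)}$ pieces inside the cluster, so that any $\Vsf_u-\ssf_2^{(u)}$ survivors let the relay reconstruct its $\Lsf/\msf_1$-size message.
\end{itemize}
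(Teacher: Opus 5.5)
\textbf{Relay layer: correct.} Your relay-layer cut-set bound is right. Conditioning on $\{W_k\}_{k\neq k^\star}$, only the $\msf_1$ relays in $\Nc_{k^\star}$ can carry information about $W_{k^\star}$, so $\Lsf\le\msf_1\Lsf_Y$. Your remark that security can only shrink the region is also how the paper uses the cited converse.

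\textbf{Intra-cluster layer: genuine gap.} Your third bullet correctly gives $\msf_2^{(u)}\Lsf_X^{(u)}\ge I(Y_u;W_k\,|\,\cdot)$ for the locally least-replicated $W_k$. The inequality you still need, $I(Y_u;W_k\,|\,\cdot)\ge\Lsf/\msf_1$, fails whenever $\rsf_{1,k}>\msf_1$. Your summed cut over $\Nc_k$ only yields $\sum_{u'\in\Nc_k}(\rsf_{2,k}^{(u')}-\ssf_2^{(u')})\Lsf_X^{(u')}\ge\Lsf$. That cannot isolate cluster $u$, because the other clusters in $\Nc_k$ may absorb the load. This is not something dimension counting can repair. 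Here $\msf_1$ is a global minimum and $\msf_2^{(u)}$ a local minimum, and they may be attained by different datasets. In that case the bound $\Rsf_2^{(u)}\ge 1/(\msf_1\msf_2^{(u)})$ appears to be false for the model as written.

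\textbf{A counterexample.} Take $\Usf=\Ksf=2$, with users $(1,1)$, $(1,2)$, $(2,1)$ holding $\{\Dc_1,\Dc_2\}$, $\{\Dc_1\}$, $\{\Dc_2\}$ respectively, and no stragglers. Then $\rsf_1=\rsf_{1,1}=1$ and $\rsf_2^{(1)}=\rsf_{2,2}^{(1)}=1$, so the stated region demands $\Rsf_2^{(1)}\ge 1$. Now split $W_k=(W_k^a,W_k^b)$ and send $X_{1,1}=W_1^a+W_2^a$, $X_{1,2}=W_1^b$, $Y_1=(X_{1,1},X_{1,2})$, and $X_{2,1}=Y_2=W_2$. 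The server obtains $W_1+W_2$ at $(\Rsf_1,\Rsf_2^{(1)},\Rsf_2^{(2)})=(1,\tfrac12,1)$, and every message's entropy equals its length. So your intra-cluster converse cannot be completed in this generality, whether by cut-set, linear dimension counting, or any other method.

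\textbf{What the paper does.} The paper gives no argument for this bound; it simply imports the region from Gholami et al. Any valid version must therefore rest on additional structure or a different notion of optimality from that work, not on a sharper cut.

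\textbf{Achievability sketch.} Separately, your sketch omits the step that actually handles heterogeneous assignment. In Section IV, each relay's rows of $\mathbf{S}_1\mathbf{F}$ are forced to vanish on datasets its cluster lacks. This is done by drawing $\mathbf{S}_1$ at random and solving for the virtual-demand rows $\mathbf{V}$ column by column. It is feasible because $(\Usf-\rsf_{1,k})\frac{\msf}{\msf_1}\le(\Usf-\msf_1)\frac{\msf}{\msf_1}$. The same zero-forcing is repeated inside each cluster, using $(\Vsf_u-\rsf_{2,k}^{(u)})\le(\Vsf_u-\rsf_2^{(u)})$. Your phrase ``encode so that any relay-level combination decodes'' does not supply this.
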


\section{Main Result}

\begin{thm}\label{thm: capacity}
    For hierarchical secure aggregation problem with arbitrary data assignment consisting of a server, $\Usf$ relays with each connected to a disjoint subset of $\Vsf_u$ users, where  there may be $\ssf_2^{(u)}<\rsf_2^{(u)}$ stragglers and $\Tsf^{(u)}\leq \Vsf_u-\rsf_2^{(u)}$ colluding users in each cluster when $\rsf_1\leq \Usf - 1$, the capacity region is  the same as~\eqref{eq:capacity region}.
    \iffalse 
    \begin{equation}
        \mathcal{R}^{\star}=\left\{
        \left(\Rsf_1,\Rsf_2^{(u)}\right):\Rsf_1\geq \frac{1}{\msf_1}, \Rsf_2^{(u)}\geq \frac{1}{\msf_1\msf_2^{(u)}},  \forall u\in[\Usf]\right\}.
    \end{equation}
    \fi 
    % {\footnotesize\begin{equation}
    %     \mathcal{R}^{\star}=\left\{\begin{aligned}
    %     &\left(\Rsf_1,\Rsf_2^{(u)}\right):\Rsf_1\geq \frac{1}{\msf_1}, \Rsf_2^{(u)}\geq \frac{1}{\msf_1\msf_2^{(u)}},  \forall u\in[\Usf]\\
    %     &\Rsf_{Z_\Sigma} \geq  \max\Bigg(\max_{i\in[\Usf]}\Big\{
    %     (\Vsf_i - \ssf^{(i)}_{2})\Rsf_{2}^{(i)}
    %     + \sum_{j\in[\Usf]\setminus\{i\}}
    %         \Tsf^{(j)}\Rsf_{2}^{(j)}
    %     \Big\},\\
    %     &\min\Big\{
    %     \sum_{i\in[\Usf]}
    %         (\Vsf_i - \ssf^{(i)}_{2})\Rsf_{2}^{(i)} - 1,
    %         \Usf\Rsf_1 + \sum_{i\in[\Usf]}
    %         \Tsf^{(i)} \Rsf_{2}^{(i)} - 1
    %     \Big\}
    %     \Bigg)
    %     \end{aligned}
    %     \right\}.
    % \end{equation}}
\end{thm}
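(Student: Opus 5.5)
The converse is immediate. Any scheme satisfying \eqref{eq: key generation}--\eqref{eq: relay security} is in particular a hierarchical aggregation scheme with stragglers, once we drop the keys and the security constraints. Theorem~\ref{thm: related result} therefore gives $\Rsf_1\geq 1/\msf_1$ and $\Rsf_2^{(u)}\geq 1/(\msf_1\msf_2^{(u)})$. The whole effort goes into achievability: a linear scheme meeting both bounds with equality that is also server- and relay-secure under the collusion budget $\Tsf^{(u)}\leq \Vsf_u-\rsf_2^{(u)}$ and the condition $\rsf_1\leq\Usf-1$.

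\textbf{Base code.} Split each $W_k$ into $\msf_1\msf_2^{(u)}$ sub-blocks, taking a common refinement over clusters, i.e.\ a length that is a multiple of $\operatorname{lcm}$ of all $\msf_1\msf_2^{(u)}$. Then take the two-layer universal gradient code of~\cite{gholami2025hierarchicalgradientcodingoptimal}. In the outer layer, relay $u$ sends $Y_u=\mathbf{B}_u\,[W_k]_{k\in\Dc^{(u)}}$ of length $\Lsf/\msf_1$. It is built so that the server can recover $\sum_k W_k$ from all $\Usf$ relays, which is possible because every $\Dc_k$ is held by at least $\rsf_1$ clusters. In the inner layer, user $(u,v)$ sends a combination $\widetilde{X}_{u,v}$ of its own datasets, of length $\Lsf/(\msf_1\msf_2^{(u)})$. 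Any $\Vsf_u-\ssf_2^{(u)}$ of these yield $Y_u$ by a linear map; this is the MDS-type structure with replication $\rsf_2^{(u)}$ and $\ssf_2^{(u)}$ stragglers.

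\textbf{Masking.} Let user $(u,v)$ send $X_{u,v}=\widetilde{X}_{u,v}+Z_{u,v}$. The keys must satisfy three requirements.
\begin{enumerate}
\item[(a)] The same decoding map applied to any surviving key set gives a cluster key $N_u$ that does not depend on which users survive, so that $Y_u=\mathbf{B}_u W+N_u$.
\item[(b)] $\sum_u \mathbf{C}_u N_u=0$, where $\mathbf{C}_u$ are the server's decoding coefficients, so the masks cancel at the server.
\item[(c)] The keys give the security properties.
\end{enumerate}
To meet (a), place the key vector $(Z_{u,v})_{v}$ in the same code as the inner layer. Concretely, $Z_{u,v}=\mathbf{G}_u(v,\cdot)\,\mathbf{S}_u$, where $\mathbf{G}_u$ is a $\Vsf_u\times(\Vsf_u-\ssf_2^{(u)})$ generator whose every $\Vsf_u-\ssf_2^{(u)}$ rows are invertible, and $\mathbf{S}_u$ consists of i.i.d.\ uniform blocks. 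Part of $\mathbf{S}_u$ determines $N_u$ through the decoding functional. The remaining free blocks, at least $\Tsf^{(u)}$ of them, make any $\Tsf^{(u)}$ user keys, combined with $N_u$, jointly uniform. To meet (b), draw $(N_1,\ldots,N_\Usf)$ uniformly from the kernel of $[\mathbf{C}_1,\ldots,\mathbf{C}_\Usf]$.

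\textbf{Security.} Relay security: relay $u$ sees all $\Vsf_u$ inputs $X_{u,v}$. Conditioned on the colluders, the unknown key components still span a space of full dimension. The reason is that the colluders in cluster $u$ number at most $\Vsf_u-\rsf_2^{(u)}<\Vsf_u-\ssf_2^{(u)}$, and colluders in other clusters reveal only their own $Z$, which is independent of $\mathbf{S}_u$ except through the correlated $N$'s. Server security: the server sees $Y_u=\mathbf{B}_uW+N_u$. Because $(N_u)$ is uniform on a kernel of dimension $(\Usf-1)\Lsf/\msf_1$ or more, $\{Y_u\}$ reveals only $\sum_k W_k$. Collusion with users in cluster $u$ may expose $N_u$, but a colluding user's key $Z_{u,v}$ does not determine $N_u$ unless $\Vsf_u-\ssf_2^{(u)}$ keys are known.

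\textbf{Main obstacle.} The hardest step is checking that the server stays secure when colluders reveal their $W_{u,v}$ and $Z_{u,v}$ in every cluster simultaneously. The $W$'s of colluders can cover datasets, and one must show that $\{Y_u\}$ together with $W_\Tc,Z_\Tc$ still leaks nothing beyond $\sum_k W_k$. I expect this is where $\rsf_1\leq\Usf-1$ is used: it should guarantee that no single relay's message is forced to carry a fixed function of the data in the clear, so the kernel of $[\mathbf{C}_u]$ is nontrivial on every relay coordinate. I also expect $\Tsf^{(u)}\leq\Vsf_u-\rsf_2^{(u)}$ to be used there, to ensure that colluders cannot reconstruct $N_u$.

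My first approach would be a rank computation. Write the server's view as $\mathbf{A}W+\mathbf{K}S$. Condition on $W_\Tc,Z_\Tc$, and show that $\mathbf{K}$ restricted to the unknown key coordinates has column space containing every direction of $\mathbf{A}W$ except the one corresponding to the sum. This reduces security to an MDS or generic-matrix argument over a large enough field $\mathbb{F}_\qsf$, which can be settled by the Schwartz--Zippel lemma with randomly chosen coefficients.
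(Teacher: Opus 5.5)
Your construction matches the paper's in substance. The converse comes from Theorem~\ref{thm: related result}. The relay masks $(N_1,\ldots,N_\Usf)$ are uniform on the kernel of the server's decoding map; this is the paper's key matrix $\mathbf{B}_1$, spanning the null space of $\mathbf{S}^{-1}_1([\msf],.)$ with rank exactly $\Usf\frac{\msf}{\msf_1}-\msf$. User keys go through the same inner code as the data (the paper's $\mathbf{S}^{(u,v)}_2\mathbf{B}^{(u)}_2\mathbf{N}$), with $\Vsf_u-\rsf_2^{(u)}\geq\Tsf^{(u)}$ extra key blocks per cluster. Making those extra blocks fresh independent randomness, rather than random combinations of one shared source key, is a harmless simplification; it gives independence of $Z_\Tc$ from $(N_1,\ldots,N_\Usf)$ directly.

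\textbf{The gap is in relay security}, which is exactly where $\rsf_1\leq\Usf-1$ is needed. Relay $u$ can compute $Y_u$, whose data part is masked only by $N_u$. Your claim that ``the unknown key components still span a space of full dimension'' presupposes that $N_u$ is uniform on all $\Lsf/\msf_1$ symbols, and this is not automatic. The vector $N$ is uniform on $\ker[\mathbf{C}_1,\ldots,\mathbf{C}_\Usf]$, whose dimension is exactly $\Usf\Lsf/\msf_1-\Lsf$. Your ``$(\Usf-1)\Lsf/\msf_1$ or more'' is false once $\msf_1\geq 2$. Since $[\mathbf{C}_1,\ldots,\mathbf{C}_\Usf]$ has full row rank, $N_u$ is uniform if and only if $[\mathbf{C}_j]_{j\neq u}$ has full row rank $\Lsf$. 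That needs $(\Usf-1)\Lsf/\msf_1\geq\Lsf$, i.e.\ $\rsf_1\leq\Usf-1$, plus a genericity argument. The paper gets genericity from a random $\mathbf{S}_1$; a fixed polynomial base code would have to be checked. In the paper this requirement sits inside Constraint~\ref{con: relay security}: the first $\frac{\msf}{\msf_1}$ rows of $\mathbf{B}^{(u)}_2$ are relay $u$'s rows of $\mathbf{B}_1$, and these can be independent only if $\Usf\frac{\msf}{\msf_1}-\msf\geq\frac{\msf}{\msf_1}$.

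You attach $\rsf_1\leq\Usf-1$ to server security instead, which misreads its role: it is intrinsic to relay security. For any secure scheme, taking $\Tc=\emptyset$ in \eqref{eq: relay security} gives $I(Y_u;\{W_k\}_{k\in[\Ksf]})=0$. Hence $\Lsf\leq I(\{Y_j\}_{j\in[\Usf]};\{W_k\}_{k\in[\Ksf]})=I(\{Y_j\}_{j\neq u};\{W_k\}_{k\in[\Ksf]}|Y_u)\leq(\Usf-1)\Rsf_1\Lsf$, so $\Rsf_1\geq\frac{1}{\Usf-1}$. This exceeds $\frac{1}{\msf_1}$ exactly when $\rsf_1=\Usf$.

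Conversely, your ``main obstacle'' is not one. The extra blocks make $Z_\Tc$ independent of $N$. This is where $\Tsf^{(u)}\leq\Vsf_u-\rsf_2^{(u)}$ enters, and you need independence, not merely that colluders ``cannot reconstruct $N_u$''. So $N$ is uniform on the kernel and independent of $(\{W_k\}_{k\in[\Ksf]},Z_\Tc)$. Given all of these, $\{Y_u\}$ is uniform on $\{y:[\mathbf{C}_1,\ldots,\mathbf{C}_\Usf]y=\sum_kW_k\}$, so it depends on the data only through the sum. Conditioning additionally on $W_\Tc$, a function of the data, changes nothing. Server security therefore needs nothing beyond the per-cluster full-rank property that makes $Z_\Tc$ independent of $N$.
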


% The converse bound can be directly derived from Theorem~\ref{thm: related result}. The achievable secure aggregation scheme for our considered problem is proposed in Section~\ref{sec: achievability}. 
% % Compared with Theorem~\ref{thm: related result}, it is evident that imposing security constraints with collusion on the hierarchical gradient coding problem does not incur any additional communication cost. 
% We outline the main ideas underlying the converse and achievability proofs as follows:
The converse follows directly from Theorem~\ref{thm: related result}, while an achievable secure aggregation scheme is presented in Section~\ref{sec: achievability}; we summarize the key ideas as follows.
\begin{itemize}
    \item A hybrid design strategy is employed to construct the computational tasks for both relays and users. Specifically, in the relay-to-server layer, the aggregation problem is reduced to a secure gradient coding formulation over a star network, where relays are treated as ``special users'' and transmit coded messages protected by the keys. Once the tasks for each relay are assigned, a secure gradient coding formulation  over a star network is applied in each cluster, which treats the keys as the messages.
    \item The source key rate by the proposed scheme is given by {\small\begin{align}\label{eq: source key rate}
            &\Rsf_{Z_\Sigma} = \max\bigg(\max_{i\in[\Usf]}\Big\{
        (\Vsf_i - \ssf^{(i)}_{2})\Rsf_{2}^{(i)}
        + \sum_{j\in[\Usf]\setminus\{i\}}
            \Tsf^{(j)}\Rsf_{2}^{(j)}
        \Big\},\nonumber\\
        &\min\Big\{
            \Usf\Rsf_1 + \sum_{i\in[\Usf]}
            \Tsf^{(i)} \Rsf_{2}^{(i)} - 1, \sum_{i\in[\Usf]}
            (\Vsf_i - \ssf^{(i)}_{2})\Rsf_{2}^{(i)} - 1
        \Big\}
        \bigg),
        \end{align}}where the first term corresponds to relay security, and the second term accounts for server security.
% The first term corresponds to relay security: for each relay $i$, except for the encrypted messages from $\Vsf_i - \ssf^{(i)}_{2}$ surviving users, $\sum_{j\in[\Usf]\setminus\{i\}}{\Tsf^{(j)}\Rsf_{2}^{(j)}}$ additional independent keys are required to address user collusion. The second term accounts for server security: the server receives encrypted messages from $\Usf$ relays, and obtains key information from $\Tsf^{(i)}$ colluding users in each cluster $i$. However, it can only recover the gradient task, and the source key rate is bounded by~\cite{zhao2023secure}.
 \item For server security, the polynomial-based scheme in~\cite{gholami2025hierarchicalgradientcodingoptimal} is unable to preserve the server security; the main reason is that from the recovered polynomial by the server, the number of inputs which zero-forces the keys is larger than the computational task, and thus there exists a leakage on the gradients except the sum. In contrast, we design the secure aggregation scheme base on linear space, ensuring that in the server's received linear space, the dimension of the keys is exactly enough.  
% make it difficult to restrict the server to observing only the global gradient, because the number of evaluation points for key cancellation at the server exceeds what is needed to recover the sum of inputs, thereby revealing additional linear combinations of individual inputs. In contrast, our linear-coding-based scheme uses rank constraints to ensure that key cancellation yields only the sum and no further information.
   % \item For collusion resistance, the polynomial-based scheme in~\cite{???}, the coefficients and interpolation points are fixed, but the set of colluding users may vary, potentially allowing them to infer information about other users’ inputs. Our linear-coding-based scheme ensures that, for any colluding subset of users, the collected key information can not reveal any information about other users’ inputs.
\end{itemize}

\section{Proof of Theorem~\ref{thm: capacity}: Achievable Scheme}\label{sec: achievability}

% In this section, we propose the achievable hierarchical secure aggregation scheme with $\Tsf^{(u)}\leq \Vsf_u-\rsf_2^{(u)}$ for any relay $u\in[\Usf]$. 
% In the following, we present an example to illustrate the main idea.
The hierarchical secure aggregation scheme consists of an aggregation server, $\Usf$ relays, and multiple users. For each $u \in [\Usf]$, relay $u$ is associated with $\Vsf_u$ users, among which there are at most $\ssf_2^{(u)} < \rsf_2^{(u)}$ stragglers. The relays are reliable, i.e., $\ssf_1 = 0$. The server and each relay may collude with at most $\Tsf^{(u)} \leq \Vsf_u - \rsf_2^{(u)}$ users in cluster $u$.

% Given an arbitrary data assignment, we have the minimum repetition factor $\rsf_2^{(u)}$ in cluster $u$ and $\rsf_1$ among all relays, 
% based on which $\msf_2^{(u)}$ and $\msf_1$ are determined. 
Given an arbitrary data assignment, the minimum repetition factors in cluster $u$ and across all relays are denoted by $\rsf_2^{(u)}$ and $\rsf_1$, then $\msf_2^{(u)}$ and $\msf_1$ are determined.
We divide each $W_k$ into $\msf \triangleq LCM(\msf_1, \msf_2^{(1)}, \ldots, \msf_2^{(\Usf)})$ non-overlapping and equal-length parts, denoted by $W_k = (W_{k,1},\ldots,W_{k,\msf})$, each containing $\Lsf^\prime=\frac{\Lsf}{\msf}$ i.i.d. uniform symbols over $\mathbb{F}_\qsf$. We introduce $\msf\Rsf_{Z_\Sigma}$ uniformly i.i.d. keys over $\mathbb{F}_\qsf^{\Lsf^\prime}$ as $Z_\Sigma$,
% \footnote{The source key rate $\Rsf_{Z_\Sigma}$ characterizes how many symbols $Z_\Sigma$ contains per input symbol. Due to the limitation of pages, related result and discussion could be found in~\cite{hierarchical2026collusion}.}
 and denote $\mathbf{W}\triangleq [W_{1,1},\cdots,W_{\Ksf,1},W_{1,2},\cdots,W_{\Ksf,\msf}]^{\Tsf}$, $\mathbf{N}\triangleq [N_1,\cdots,N_{\msf\Rsf_{Z_\Sigma}}]^{\Tsf}$. It has been proven in~\cite{zhao2022information} that, by assuming that $\Lsf$ is large enough, we can see that $\qsf$ is large enough.

% We adopt a top-down design strategy. Specifically, we first construct the demand matrix at the relay-to-server layer to determine the sub-tasks assigned to each relay, and then design an intra-cluster coding scheme that satisfies both the encodability and security constraints.

{\bf Relay-to-Server layer.}  
By the converse bound $\Rsf_1 \geq \frac{1}{\msf_1}$ in Theorem~\ref{thm: related result}, the server can receive $\frac{\Usf}{\msf_1}\Lsf$ symbols from the $\Usf$ relays, i.e., $\sf U\frac{m}{m_1}$ linear combinations of length $\Lsf^\prime$, collectively denoted by $\mathbf{FW}$. We let $\mathbf{F}=[\mathbf{A};\mathbf{V}]$, where $\mathbf{A}$ denotes the first $\msf$ rows corresponding to the aggregation matrix, and $\mathbf{V}$ denotes the remaining rows corresponding to the virtual demand matrix designed to ensure encodability.

Each relay $u\in[\Usf]$ sends $\sf \frac{m}{m_1}$ messages of length $\Lsf^\prime$ to the server for task recovery, which is
\begin{equation}\label{eq: relay transmission}
   Y_u=\mathbf{S}^{(u)}_1\mathbf{FW}+\mathbf{B}^{(u)}_1\mathbf{N},
\end{equation}
where $\mathbf{S}^{(u)}_1$ and $\mathbf{B}^{(u)}_1$ represent the transmission matrix and key matrix of relay $u$, respectively.

Thus, the server can receive
{\small
\begin{align}\label{eq: sub-tasks}
    \mathbf{S}_1\mathbf{FW}+\mathbf{B}_1\mathbf{N} = 
    \left[\begin{NiceArray}{c|[tikz={dash pattern=on 4pt off 3pt}]c}
        \mathbf{S}^{(1)}_1\mathbf{F} & \mathbf{B}^{(1)}_1\\
        \vdots & \vdots\\
        \mathbf{S}^{(\Usf)}_1\mathbf{F} & \mathbf{B_1^{(\Usf)}}
    \end{NiceArray}\right]
    \begin{bmatrix}
       \mathbf{W}\\
       \mathbf{N}
    \end{bmatrix} \eqqcolon
    \left[\begin{NiceArray}{c|[tikz={dash pattern=on 4pt off 3pt}]c}
        \mathbf{F}_{1}  & \mathbf{B}_1 
        \end{NiceArray}\right]\mathbf{W^\prime}.
\end{align}
}

Since the coefficients corresponding to the updates $W_i$ that cluster $u$ cannot compute must be `0'; i.e., for each $\Dc_i$ such that $\Dc_i \in \Dc \setminus \Dc^{(u)}$, we require $\mathbf{S}^{(u)}_1\mathbf{F}(.,\{i+(j-1)\Ksf\})=\boldsymbol{0}$, where $j\in[\msf]$.
We generate a full-rank transmission matrix $\mathbf{S}_1$ of dimension
$
\Usf\frac{\msf}{\msf_1}\times\Usf\frac{\msf}{\msf_1}
$,
with elements uniformly i.i.d. over $\mathbb{F}_\qsf$, and design $\mathbf{V}$ based on the encodability constraint. According to~\eqref{eq: sub-tasks}, we have $\mathbf{S}^{-1}_1([\msf],.)\mathbf{F}_1=\mathbf{A}$. To guarantee decodability, we construct $\mathbf{B}_1$ as a matrix spanning the null space of $\mathbf{S}^{-1}_1([\msf], .)$.
% In order to guarantee the security constraints, we introduce the sub-matrix of keys as $\mathbf{A}$, and let the demand matrix be encrypted as
% $\left[
%     \begin{NiceArray}{c|[tikz={dash pattern=on 4pt off 3pt}]c}
%         \mathbf{F_{1}^\prime}  & \mathbf{A} 
%     \end{NiceArray}
% \right]$. 
% The design of $\mathbf{A}$ should satisfy the following constraints.
\begin{constraint}[Decodability]\label{con: decodability}
    The key matrix $\mathbf{B}_1$ has rank no more than $\left(\sf U\frac{m}{m_1}-m\right)$. Furthermore, $\mathbf{B}_1$ should satisfy
\begin{equation}
    \mathbf{S}^{-1}_1([\msf],.)\mathbf{B}_1=\mathbf{0}.
\end{equation}
\end{constraint}
% \begin{constraint}[Server Security  Against Collusion]
%     The rank of the key matrix $\mathbf{B}_{1}$ is equal to $\left(\sf U\frac{m}{m_1}-m\right)$.
% \end{constraint}

{\bf Intra-cluster layer.} We assign the key-encrypted linear combinations in~\eqref{eq: sub-tasks} to the relays as computational tasks. Each relay is assigned $\sf \frac{m}{m_1}$ non-overlapping rows of the global demand matrix $\left[
    \begin{NiceArray}{c|[tikz={dash pattern=on 4pt off 3pt}]c}
        \mathbf{F}_{1}  & \mathbf{B}_{1} 
    \end{NiceArray}
\right]$, which correspond to the first $\frac{\msf}{\msf_1}$ rows of the local demand matrix
$\left[
    \begin{NiceArray}{c|[tikz={dash pattern=on 4pt off 3pt}]c}
        \mathbf{F}^{(u)}_2  & \mathbf{B}^{(u)}_2 
    \end{NiceArray}
\right]$ 
of relay $u$.

In cluster $u$, by the converse bound $\Rsf_2^{(u)}\geq \frac{1}{\msf_1\msf_2^{(u)}}$, each user sends $\frac{\msf}{\msf_1\msf_{2}^{(u)}}$ linearly independent messages of length $\Lsf^\prime$, 
\begin{align}\label{eq: user transmission}
    X_{u, v}=\mathbf{S}^{(u,v)}_2
    \left[\begin{NiceArray}{c|[tikz={dash pattern=on 4pt off 3pt}]c}
        \mathbf{F}^{(u)}_2  & \mathbf{B}^{(u)}_2
        \end{NiceArray}\right]\mathbf{W^\prime},
\end{align}
where $\mathbf{S}^{(u,v)}_2$ and $\mathbf{B}^{(u)}_2$ represent the transmission matrix and key matrix of user $(u, v)$, respectively. The keys stored in user $(u,v)$ are correspondingly $\mathbf{S}^{(u,v)}_2\mathbf{B}^{(u)}_2\mathbf{N}$.

Thus, the relay can receive $(\Vsf_u-\ssf_2^{(u)})\frac{\msf}{\msf_1\msf_{2}^{(u)}}$ dimension from the surviving users. We generate the matrix $\mathbf{S}^{(u)}_2$ of dimension $\Vsf_u\frac{\msf}{\msf_1\msf_2^{(u)}}\times (\Vsf_u-\ssf_2^{(u)})\frac{\msf}{\msf_1\msf_2^{(u)}}$ by vertically concatenating $\Big\{\mathbf{S}^{(u,v)}_2,v\in[\Vsf_u]\Big\}$, with elements uniformly i.i.d. over $\mathbb{F}_\qsf$, and design the virtual demand coefficient matrix of $\mathbf{F}^{(u)}_{2}$ according to encodability; the remaining rows of $\mathbf{B}^{(u)}_{2}$ are chosen uniformly i.i.d. over $\mathbb{F}_\qsf$.

{\bf Relay Security.} Our scheme allows $\Tsf^{(u)}\leq \Vsf_u-\rsf_2^{(u)}$ colluding users in each cluster $u\in[\Usf]$. For cluster $u$, let $\mathbf{S}^{\Tc^{(u)}}_2$ represent the transmission matrix of the colluding users $\Tc^{(u)}$ in cluster $u$, which is obtained by vertically concatenating $\Big\{\mathbf{S}^{(u,v)}_2,v\in\Tc^{(u)}\Big\}$.
For any relay $u\in[\Usf]$, we denote $\Uc_{\overline{u}}$ as 
$[\Usf]\setminus \{u\}$,
% $\Uc_{\overline{u}}\triangleq [\Usf]\setminus \{u\}$, 
where $\big|\Uc_{\overline{u}}\big|=\Usf-1$. If relay $u$ colludes with $\Tc^{(i)}$ in each cluster $i \in \Uc_{\overline{u}}$, the following constraint needs to be satisfied. 
\begin{constraint}[Relay security against collusion]\label{con: relay security} The matrix
    $$\Big[\mathbf{B}^{(u)}_2;\mathbf{S}^{\Tc^{(\Uc_{\overline{u}}(1))}}_2\mathbf{B}^{(\Uc_{\overline{u}}(1))}_2;\cdots;\mathbf{S}^{\Tc^{(\Uc_{\overline{u}}(\Usf-1))}}_2\mathbf{B}^{(\Uc_{\overline{u}}(\Usf-1))}_2\Big]$$ is full rank.
\end{constraint}

This constraint guarantees that the keys from external colluding users are linearly independent of those
in cluster $u$, preventing the relay from obtaining any information about non-colluding datasets. For colluding users within the cluster, the relay may recover  certain dataset information; however, this information is limited to what is already transmitted by the colluding users themselves, and no additional information about non-colluding datasets is leaked.
% \vspace{0.8mm}

{\bf Server security.} When the server colludes with $\Tsf^{(u)}\leq \Vsf_u-\rsf_2^{(u)}$ users in each cluster $u\in[\Usf]$, the following constraint needs to be satisfied.
\begin{constraint}[Server Security Against Collusion]\label{con: server security}
    The rank of the key matrix $\mathbf{B}_{1}$ is equal to $\left(\sf U\frac{m}{m_1}-m\right)$. Furthermore,
    % $\begin{bNiceArray}{c}
    %     \Block[borders={bottom,tikz={dash pattern=on 4pt off 3pt}}]{1-1}{\mathbf{B}_{1}} \\
    %     \Block[borders={bottom,tikz={dash pattern=on 4pt off 3pt}}]{1-1}{\mathbf{B_2^{(1)}}} \\
    %     \Block[borders={bottom,tikz={dash pattern=on 4pt off 3pt}}]{1-1}{\vdots} \\
    %     \mathbf{B_2^{(\Usf)}}
    % \end{bNiceArray}$  
    % $\left[\begin{NiceArray}{c|[tikz={dash pattern=on 4pt off 3pt}]c|[tikz={dash pattern=on 4pt off 3pt}]c|[tikz={dash pattern=on 4pt off 3pt}]c}
    %     \mathbf{B_{1}}  & \mathbf{B_2^{(1)}} & \cdots & \mathbf{B_2^{(\Usf)}}
    % \end{NiceArray}\right]^{\Tsf}$
$$\Big[\mathbf{B}_{1};\mathbf{S}^{\Tc^{(1)}}_2\mathbf{B}^{(1)}_2;\cdots;\mathbf{S}^{\Tc^{(\Usf)}}_2\mathbf{B}^{(\Usf)}_2\Big]$$
     has rank equal to $\Big(\Usf\frac{\msf}{\msf_1}-\msf+\sum_{u\in[\Usf]}\big|\Tc^{(u)}\big|\frac{\msf}{\msf_1\msf_2^{(u)}}\Big)$.
\end{constraint}

By the seminal information theoretic security results by Shannon~\cite{shannonsecurity}, the server can only recover $\msf\Lsf^\prime=\Lsf$ symbols about the inputs, which is the aggregated gradients. No more information about partial gradients is leaked; thus, server security is satisfied.

Constraint~\ref{con: decodability} can be directly satisfied by selecting the key matrix $\mathbf{B}_1$ from the null space of $\mathbf{S}^{-1}_1([\msf],.)$, while the proofs of Constraint~\ref{con: relay security} and~\ref{con: server security}, and the information theoretic security proof are provided in 
Appendix~\ref{appendix: constraint proof} and Appendix~\ref{appendix:information theoretic proof}.
% the extended version of this paper~\cite{hierarchical2026collusion}.

{\bf Performance.} 
% Each relay sends $Y_u$ in~\eqref{eq: relay transmission} containing $\sf \frac{m}{m_1}$ messages of length $\Lsf^\prime$, leading to $\Rsf_1 = \frac{1}{\msf_1}$. Each user in cluster $u$ transmits $\frac{\msf}{\msf_1\msf_{2}^{(u)}}$ messages of length $\Lsf^\prime$, resulting in $\Rsf^{(u)}_2=\frac{1}{\msf_1\msf^{(u)}_2}$, coinciding with Theorem~\ref{thm: capacity} with  the source key rate in~\eqref{eq: source key rate}, satisfying Constraint~\ref{con: relay security} and Constraint~\ref{con: server security}.
Each relay transmits $Y_u$ in~\eqref{eq: relay transmission}, consisting of $\sf \frac{m}{m_1}$ messages of length $\Lsf^\prime$, which yields $\Rsf_1=\frac{1}{\msf_1}$. Each user $(u,v)$ sends $\frac{\msf}{\msf_1\msf_2^{(u)}}$ messages of length $\Lsf^\prime$ in~\eqref{eq: user transmission}, resulting in $\Rsf_2^{(u)}=\frac{1}{\msf_1\msf_2^{(u)}}$. With the source key rate specified in~\eqref{eq: source key rate}, both Constraint~\ref{con: relay security} and Constraint~\ref{con: server security} are satisfied.

\begin{figure*}
    {\footnotesize \begin{equation}\label{eq: task}
    % \mathbf{S}_1=\begin{bmatrix}
    %     1 & 3 & 2 & 1\\
    %     5 & 1 & 1 & 2\\
    %     2 & 4 & 3 & 1\\
    %     1 & 1 & 3 & 2
    % \end{bmatrix},\quad
    \mathbf{F}=
    \addtocounter{MaxMatrixCols}{14}
    \begin{bmatrix}
        1 & 1 & 1 & 1 & 1 & 1 & 0 & 0 & 0 & 0 & 0 & 0\\
        0 & 0 & 0 & 0 & 0 & 0 & 1 & 1 & 1 & 1 & 1 & 1 \\
        -1 & 1 & 3 & 2 & 1 & -1 & -\frac{7}{3} & 2 & 3 & 2 & -\frac{5}{3} & -\frac{7}{3}\\
        1 & 3 & 1 & 1 & -3 & 1 & 3 & 3 & 2 & 1 & \frac{1}{3} & 3
    \end{bmatrix},
    \mathbf{F_1} = \addtocounter{MaxMatrixCols}{14}\begin{bmatrix}
        0 & 6 & 8 & 6 & 0 & 0 & \frac{4}{3}& 10 & 11 & 8 & 0 & \frac{4}{3}\\
        6 & 12 & 10 & 9 & 0 & 6 & \frac{14}{3} & 9 & 8 & 5 & 0 & \frac{14}{3}\\
        0 & 8 & 12 & 9 & 2 & 0 & 0 & 13 & 15 & 11 & -\frac{2}{3} & 0\\
        0 & 10 & 12 & 9 & -2 & 0 & 0 & 13 & 14 & 9 & -\frac{10}{3} & 0
    \end{bmatrix}.
\end{equation}}
\end{figure*}

\begin{figure*}
    \begin{equation}\label{eq: relay1 transmission}
    \left[
    \begin{NiceArray}{c|[tikz={dash pattern=on 4pt off 3pt}]c}
        \mathbf{F}^{(1)}_{2}   & \mathbf{B_2^{(1)}} 
    \end{NiceArray}\right]=
    \left[\begin{NiceArray}{cccccccccccc|[tikz={dash pattern=on 4pt off 3pt}]ccccc}
        0 & 6 & 8 & 6 & 0 & 0 & \frac{4}{3}& 10 & 11 & 8 & 0 & \frac{4}{3} & 1 & 1 & 2 & 3 & 3\\
        6 & 12 & 10 & 9 & 0 & 6 & \frac{14}{3} & 9 & 8 & 5 & 0 & \frac{14}{3} & -1 & 2 & 1 & 3 & 0\\
        2 & 18 & -33 & -\frac{51}{2} & 0 & 4 & 5 & 19 & -\frac{85}{2}& -\frac{61}{2} & 0 & 1 & 1 & 3 & 4 & 3 & 1 \\
        1 & -66 & 16 & 12 & 0 & 3 & 2 & -66 & 22 & 16 & 0 & 3 & 4 & 1 & 3 & 2 & 1 
    \end{NiceArray}\right].
\end{equation}
\end{figure*}
\begin{example}
\rm
% Consider an aggregation server connected to $\Usf = 2$ relays, where relay 1 and relay 2 are associated with $\Vsf_1 = 2$ and $\Vsf_2 = 4$ users, respectively.
% The training dataset is partitioned into $\Ksf = 6$ datasets, i.e., $\Dc =\{\Dc_1, \ldots, \Dc_6\}$, where each partial gradient $W_k$ is computed from the corresponding dataset $\Dc_k$. Consider an arbitrary data assignment as shown in Table~\ref{tab: dataset assignment}, with $\rsf_1 = 1$, $\rsf_{2}^{(1)} = 1$, and $\rsf_{2}^{(2)} = 3$.
Consider a  server connected to $\Usf = 2$ relays with $\Vsf_1 = 2$ and $\Vsf_2 = 4$ users.
The training dataset is partitioned into $\Ksf = 6$ non-overlapping and
equal-length datasets $\Dc =\{\Dc_1, \ldots, \Dc_6\}$, where each partial gradient $W_k$ is computed from the corresponding dataset $\Dc_k$. The data assignment is shown in Table~\ref{tab: dataset assignment}, with $\rsf_1 = 1$, $\rsf_{2}^{(1)} = 1$, and $\rsf_{2}^{(2)} = 3$.
\vspace{-1mm}
\begin{table}[htbp]
\caption{Data assignment}
\centering
\begin{tabular}{|c|c|c|c|c|c|c|c|}
% \hline
% \multicolumn{7}{|c|}{\textbf{Dataset Assignment Among Users}} \\
\hline
\multicolumn{2}{|c|}{  \diagbox[
    dir=SE
  ]{relay}{user}{dataset}} & $\Dc_1$ & $\Dc_2$ & $\Dc_3$ & $\Dc_4$ & $\Dc_5$ & $\Dc_6$ \\
\hline
\multirow{2}{*}{1} & $(1,1)$ & $\checkmark$ & & $\checkmark$ & $\checkmark$ & & $\checkmark$ \cr
~ & $(1,2)$ & $\checkmark$ & $\checkmark$ & & & & $\checkmark$ \\
\hline
\multirow{4}{*}{2} & $(2,1)$ & & $\checkmark$ & $\checkmark$ & & $\checkmark$ & \cr
~ & $(2,2)$ & & $\checkmark$ & $\checkmark$ & $\checkmark$ & $\checkmark$ & \cr
~ & $(2,3)$ & & $\checkmark$ & $\checkmark$ & $\checkmark$ & & \cr
~ & $(2,4)$ & & & $\checkmark$ & $\checkmark$ & $\checkmark$ & \\
\hline
\end{tabular}
\vspace{-1mm}
% \vspace{0.6em}

% \begin{tabular}{|c|c|c|c|c|c|c|}
% \hline
% \multicolumn{7}{|c|}{\textbf{Dataset Assignment Among Relays}} \\
% \hline
% \diagbox{relay}{dataset} & $\Dc_1$ & $\Dc_2$ & $\Dc_3$ & $\Dc_4$ & $\Dc_5$ & $\Dc_6$ \\
% \hline
% $1$ & $\checkmark$ & $\checkmark$ & $\checkmark$ & $\checkmark$ & & $\checkmark$ \\
% \hline
% $2$ & & $\checkmark$ & $\checkmark$ & $\checkmark$ & $\checkmark$ & \\
% \hline
% \end{tabular}
\label{tab: dataset assignment}
\end{table}

All relays are reliable, i.e., $\ssf_1 = 0$, and we assume the number of stragglers in cluster 1 and cluster 2 are $\ssf_2^{(1)}=0$, $\ssf_2^{(2)}=1$, respectively. Thus, we have $\msf_1 = 1, \msf_2^{(1)}=1, \msf_2^{(2)}=2$ and subsequently  divide each $W_k$ into $\msf \triangleq LCM(\msf_1, \msf_2^{(1)}, \msf_2^{(2)})=2$ non-overlapping and equal-length pieces $W_k = (W_{k,1},W_{k,2})$. 
We introduce $5$ independent keys $N_1,\ldots,N_5$ of length $\frac{\Lsf}{2}$, whose elements are uniformly i.i.d. over $\mathbb{F}_\qsf$, i.e., $Z_\Sigma=\{N_1,\ldots,N_5\}$. In this example, for the ease of illustration, we assume that $\qsf$ is a large prime number, which is not required in our general scheme.
% We denote $\mathbf{W}\triangleq [W_{1,1},\cdots,W_{6,1},W_{1,2},\cdots,W_{6,2}]^{\Tsf}$ and $\mathbf{N}\triangleq [N_1,\cdots,N_5]^{\Tsf}$.
% We assume $\ssf_1 = 0$ and the number of stragglers in cluster 1 and cluster 2 are $\ssf_2^{(1)}=0$, $\ssf_2^{(2)}=1$, respectively. Thus, we have $\msf_1 = 1, \msf_2^{(1)}=1, \msf_2^{(2)}=2$. Each $W_k$ is divided into $\msf \triangleq LCM(\msf_1, \msf_2^{(1)}, \msf_2^{(2)})=2$ non-overlapping and equal-length pieces $W_k = (W_{k,1},W_{k,2})$. 
% We introduce $5$ independent keys $N_1,\ldots,N_5$ of length $\frac{\Lsf}{2}$, i.e., $Z_\Sigma=\{N_1,\ldots,N_5\}$.
% We denote $\mathbf{W}\triangleq [W_{1,1},\cdots,W_{6,1},W_{1,2},\cdots,W_{6,2}]^{\Tsf}$ and $\mathbf{N}\triangleq [N_1,\cdots,N_5]^{\Tsf}$.

{\bf Relay-to-Server layer.} Based on Theorem~\ref{thm: related result}, we have $\Rsf_1 \geq \frac{1}{\msf_1}=1$. After relays' transmissions, the server receives
\begin{align}\label{eq: eg sub-tasks}
    \mathbf{S}_1\mathbf{FW}+\mathbf{B}_1\mathbf{N} \eqqcolon
    \left[\begin{NiceArray}{c|[tikz={dash pattern=on 4pt off 3pt}]c}
        \mathbf{F}_{1}  & \mathbf{B}_{1} 
        \end{NiceArray}\right]\mathbf{W^\prime},
\end{align}
where the matrix $\mathbf{F}$ has dimension $4\times12$, with the first two rows $\mathbf{A}$ corresponding to the aggregated gradient task. 
% i.e., $\sum_{k \in [6]} W_k$, and the remaining two rows representing virtual demands.
% with $\begin{bmatrix}
%     \mathbf{S_1^{(1)}}_{2\times 4}\\
%     \mathbf{S}^{(2)}_1_{2\times 4}
% \end{bmatrix}\eqqcolon \mathbf{S}_1$ and $\begin{bmatrix}
%     \mathbf{B_1^{(1)}}_{2\times 4}\\
%     \mathbf{B_1^{(2)}}_{2\times 4}
% \end{bmatrix}\eqqcolon \mathbf{B}_{1}$ representing the transmission coefficients and key coefficients respectively.
% The server can receive $\begin{bmatrix}
%     \mathbf{S_1^{(1)}}_{2\times 4}\\
%     \mathbf{S}^{(2)}_1_{2\times 4}
% \end{bmatrix}\mathbf{F_1W}$ from the messages of relays. 
% According to~\cite{jahani2021optimal}, we have $\Rsf_1 \geq \frac{1}{\msf_1}=1$. 
% , and $\mathbf{W}=[W_1,\cdots,W_6]^{\Tsf}$.
% i.e., 6 messages of length $\Lsf^\prime$, with each relay sending 2 messages.
% with each relay sending 2 messages of length $\Lsf^\prime=\frac{\Lsf}{2}$. 
% Since the coefficients corresponding to the updates $W_i$ that relay $u$ cannot compute must be `0'; i.e., for each $i$ such that $\Dc_i \in \Dc \setminus \Dc^{(u)}$, we require $\mathbf{S_1^{(u)}}\mathbf{F_1}(.,\{i,i+6\})=\boldsymbol{0}$.
We generate a full-rank matrix $\mathbf{S}_1=$ {\footnotesize $\begin{bmatrix}
        1 & 3 & 2 & 1\\
        5 & 1 & 1 & 2\\
        2 & 4 & 3 & 1\\
        1 & 1 & 3 & 2
    \end{bmatrix}$}  with elements uniformly i.i.d. over $\mathbb{F}_\qsf$,
% in \eqref{eq: task},
% $\mathbf{S}_1=
% \begin{bmatrix}
%     \mathbf{S_1^{(1)}}_{2\times 4}\\
%     \mathbf{S}^{(2)}_1_{2\times 4}
% \end{bmatrix}$
% with elements uniformly i.i.d. over $\mathbb{F}_\qsf$
 and design the last 2 rows $\mathbf{V}$ based on the encodability. For example, for the first column of $\mathbf{F}$,  since $\Dc_1$ is not available at relay 2, we have $\mathbf{S}^{(2)}_1\mathbf{F}(.,\{1\})=\mathbf{0}$, then $\begin{bmatrix}
    3&1\\
    3&2\\
\end{bmatrix}
    \begin{bmatrix}
       \mathbf{F}(\{3\},\{1\})\\
       \mathbf{F}(\{4\},\{1\})
    \end{bmatrix}=\begin{bmatrix}
        -2\\
        -1
    \end{bmatrix}$.
Based on the above equation, the third and fourth entries of $\mathbf{F}(.,\{1\})$ can be determined. The remaining columns of $\mathbf{F}$ can be obtained in the same manner, and the detailed construction of $\mathbf{F}$ and $\mathbf{F}_1$ is presented in~\eqref{eq: task}. 

{\bf Decodability.} By selecting the first two rows of $\mathbf{S}^{-1}_{1}\mathbf{F}_{1}$ we have
 $\mathbf{S}^{-1}_{1}([2],.)\mathbf{F}_{1}=
    \begin{bmatrix}
        \mathbf{1}_{1\times6} & \mathbf{0}_{1\times6} \\ 
        \mathbf{0}_{1\times6} & \mathbf{1}_{1\times6}
    \end{bmatrix}$.
According to Constraint~\ref{con: decodability}, the key matrix $\mathbf{B}_{1}$ should have rank no more than $2$, and satisfy $\mathbf{S}^{-1}_{1}([2],.)\mathbf{B}_{1}=\mathbf{0}$; we provide one possible $\mathbf{B}_{1}$ as
% Constraint~\ref{con of eg: decodability} and Constraint~\ref{con of eg: server security}
\begin{equation}
    \mathbf{B}_{1}=
    \begin{bmatrix}
        1 & 1 & 2 & 3 & 3\\
        -1 & 2 & 1 & 3 & 0\\
        2 & 1 & 3 & 4 & 5\\
        1 & 2 & 3 & 5 & 4
    \end{bmatrix}.
\end{equation}
Denote $\mathbf{F}_{1}$ in~\eqref{eq: task} as $[\fv_{1};\fv_{2};\fv_{3};\fv_{4}]$, where $\fv_{i}$ represents the $i^{\text{th}}$ row of $\mathbf{F}_{1}$. We assign the key-encrypted linear combinations $ \left[\begin{NiceArray}{c|[tikz={dash pattern=on 4pt off 3pt}]c}
        \mathbf{F}_{1}  & \mathbf{B}_{1} 
    \end{NiceArray}\right]
    \mathbf{W^\prime}$ as computational tasks to each cluster.
% \begin{align}\label{eq: ultimate sub-tasks}
%     \left[\begin{NiceArray}{c|[tikz={dash pattern=on 4pt off 3pt}]c}
%         \mathbf{F}_{1}  & \mathbf{B}_{1} 
%     \end{NiceArray}\right]
%     \mathbf{W^\prime}
%     = \left[
%         \begin{NiceArray}{c|[tikz={dash pattern=on 4pt off 3pt}]ccccc}
%             \fv_{2,1} & 1 & 1 & 2 & 3 & 3\\
%             \fv_{2,2} & -1 & 2 & 1 & 3 & 0\\
%             \fv_{2,3} & 2 & 1 & 3 & 4 & 5\\
%             \fv_{2,4} & 1 & 2 & 3 & 5 & 4
%         \end{NiceArray}
%         \right]
%     \mathbf{W^\prime}.
% \end{align}

{\bf Intra-cluster layer.} The assigned computational tasks to cluster 1 are
\begin{equation}
    \left[\begin{NiceArray}{c|[tikz={dash pattern=on 4pt off 3pt}]ccccc}
        \fv_{1} & 1 & 1 & 2 & 3 & 3 \\
        \fv_{2} & -1 & 2 & 1 & 3 & 0
    \end{NiceArray}\right]
    \mathbf{W^\prime}.
    \label{task for relay 1}
\end{equation}

For cluster 1, we have $\Rsf_2^{(1)}\geq \frac{1}{\msf_1\msf_2^{(1)}}=1$. Thus, relay $1$ can receive a 4-dimensional messages. We generate a full-rank matrix $ \mathbf{S}^{(1)} _{2}=${\footnotesize$\begin{bmatrix}
       \mathbf{S}^{(1,1)} _{2}\\
       \mathbf{S}^{(1,2)} _{2}
    \end{bmatrix}=$ $\begin{bmatrix}
        2 & 3 & 1 & 1\\
        1 & 2 & 2 & 1\\
        1 & 1 & 2 & 3\\
        3 & 1 & 2 & 2
    \end{bmatrix}$} with elements uniformly i.i.d. over $\mathbb{F}_\qsf$. The first two rows of $\mathbf{F}^{(1)}_{2}$ and $\mathbf{B}^{(1)}_{2}$ are taken from \eqref{task for relay 1}, while the last two rows of  $\mathbf{F}^{(1)}_{2}$ are carefully designed to ensure encodability, as shown in~\eqref{eq: relay1 transmission} at the top of the page, and the last two rows of $\mathbf{B}^{(1)}_{2}$ are selected uniformly i.i.d. over $\mathbb{F}_\qsf$.
% , and the elements of the last two rows are selected uniformly i.i.d. over $\mathbb{F}_\qsf$.
% , perfectly protecting relay security when there is no collusion. 

The assigned computational tasks to cluster 2 are
\begin{equation}
    \left[\begin{NiceArray}{c|[tikz={dash pattern=on 4pt off 3pt}]ccccc}
      \fv_{3} & 2 & 1 & 3 & 4 & 5\\
      \fv_{4} & 1 & 2 & 3 & 5 & 4
    \end{NiceArray}\right]
    \mathbf{W^\prime}.
    \label{task for relay 2}
\end{equation}

For cluster 2, we have $\Rsf_2^{(2)}\geq \frac{1}{\msf_1\msf_2^{(2)}}=\frac{1}{2}$; thus relay 2 can receive 3-dimensional messages from $\Vsf_2-\ssf_2^{(2)}=3$ surviving users.  By applying the same strategy, we generate a full-rank matrix $\mathbf{S}^{(2)}_{2}$ of dimension $4\times 3$, whose elements are uniformly i.i.d. over $\mathbb{F}_\qsf$. Similarly, the design follows the same procedure as for cluster 1.
The data coefficient matrix $\mathbf{F}^{(2)}_{2}$ is  designed to ensure encodability,
% \begin{align}\label{eq: relay2 transmission}
%     \mathbf{S}^{(2)}_{2}\mathbf{F_{2}^{(2)}W^\prime} 
%     = \mathbf{S}^{(2)}_{2}
%     % \begin{bmatrix}
%     %     1 & 5 & 2 & 3\\
%     %     4 & 2 & 4 & 1\\
%     %     1 & 1 & 2 & 3\\
%     %     3 & 1 & 1 & 2\\
%     %     2 & 3 & 1 & 1\\
%     %     1 & 2 & 3 & 1
%     % \end{bmatrix}
%     % \left[\begin{NiceArray}{cc|[tikz={dash pattern=on 4pt off 3pt}]cccccc}
%     %     \fv_{2,2} & \mathbf{0}_{1\times 6} & \frac{1}{3} & \frac{2}{3} & \frac{2}{3} & \frac{1}{3} & 1 & 1\\
%     %     \mathbf{0}_{1\times 6} & \fv_{2,2} & \frac{1}{3} & \frac{1}{3} & \frac{2}{3} & \frac{2}{3} & \frac{2}{3} & \frac{4}{3}\\
%     %     * & * & 2 & 5 & 1 & 3 & 2 & 4\\
%     %     * & * & 1 & 2 & 2 & 5 & 3 & 3
%     % \end{NiceArray}\right]
%     \left[\begin{NiceArray}{c|[tikz={dash pattern=on 4pt off 3pt}]ccccc}
%         \fv_{2,3} & 2 & 1 & 1 & 3 & \frac{3}{2}\\  
%         \fv_{2,4} & 1 & 3 & -2 & 4 & 2\\
%         * & 2 & 5 & 1 & 3 & 2 \\
%     \end{NiceArray}\right]
%     \mathbf{W^\prime},
% \end{align}
% }
the key coefficient matrix is 
\begin{equation}\label{eq: relay2 key matrix}
\mathbf{B}^{(2)}_{2}:=\begin{bmatrix}
    2 & 1 & 3 & 4 & 5\\
    1 & 2 & 3 & 5 & 4\\
    3 & 5 & 1 & 3 & 2 
\end{bmatrix},
\end{equation}
where the first two rows are taken from \eqref{task for relay 2}, and the elements of the last row are selected uniformly i.i.d. over $\mathbb{F}_\qsf$.

{\bf Relay security.} According to Constraint~\ref{con: relay security}, the coefficient matrices $\mathbf{B}^{(1)}_{2}$ and $\mathbf{B}^{(2)}_{2}$, are full-rank to guarantee relay security without collusion. Additionally, we have $\Tsf^{(1)}\leq \Vsf_1-\rsf_2^{(1)}=1$ and $\Tsf^{(2)}\leq \Vsf_2-\rsf_2^{(2)}=1$. For any relay $u\in[2]$, if it colludes with any inter-cluster user $\Tc^{(i)}$, where $i\in[2]\setminus\{u\}$ and $\big|\Tc^{(i)}\big|=1$, it can obtain $\big[\mathbf{B}^{(u)}_2;\mathbf{S}^{\Tc^{(i)}}_2\mathbf{B}^{(i)}_2\big]_{5\times 5}$, which can be verified to have rank equal to 5, fully preventing the relay from inferring non-colluding users' inputs. 
% Hence, relay security is satisfied according to Constraint~\ref{con: relay security}.

% \begin{align}\label{eq: colluding relay key matrix}
%     \begin{bNiceArray}{c}
%         \Block[borders={bottom,tikz={dash pattern=on 4pt off 3pt}}]{1-1}{\mathbf{B}^{(u)}_2}\\
%         \mathbf{S_2^{(i,j)}}\mathbf{B_2^{(i)}}
%     \end{bNiceArray},
% \end{align}
% \begin{align}\label{eq: colluding relay key matrix}
%     \begin{bmatrix}
%         \mathbf{B}^{(u)}_2\\
%         \mathbf{S}^{(i,j)}_2\mathbf{B}^{(i)}_2
%     \end{bmatrix}
% \end{align}
% where $i\neq u$.

% we use $\mathbf{S_2^{(i,j)}}$ to denote the transmission matrix of user $(i,j)$.
% , which is a sub-matrix of $\mathbf{S_2^{(i)}}$. 
% It can be verified that the matrix in~\eqref{eq: colluding relay key matrix} is full rank, preventing the relay from inferring non-culluding users' inputs.

{\bf Server security.} The key coefficient matrix at the server $\mathbf{B}_1$ has rank equal to 2. When the server colludes with user $\Tc^{(1)}$ in cluster 1 and user $\Tc^{(2)}$ in cluster 2, where $\big|\Tc^{(1)}\big|=\big|\Tc^{(2)}\big|=1$, it can obtain $\Big[\mathbf{B}_{1};\mathbf{S}^{\Tc^{(1)}}_2\mathbf{B}^{(1)}_2;\mathbf{S}^{\Tc^{(2)}}_2\mathbf{B}^{(2)}_2\Big]_{7\times 5}$, whose rank is verified to be $5$, thus the server can only recover $\Lsf$ symbols of inputs, which is the aggregated gradients.
\paragraph*{Acknowledgement}
The work of C.~Sun, Z.~Zhang, and K.~Wan was funded by NSFC-12141107 and Wuhan ``Chen Guang'' Pragram under Grant 2024040801020211. The work of X.~Zhang was supported by the Gottfried Wilhelm Leibniz-Preis 2021 of the German Science Foundation (DFG).
% Hence, server security is satisfied according to Constraint~\ref{con: server security}.

% Similarily, by Constraint~\ref{con of eg: server security}, server security is satisfied with no collusion. 
% When the server colludes with user $(1,v_1)$ in cluster 1 and user $(1,v_2)$ in cluster 2, where $v_1\in[2], v_2\in[4]$, it can obtain
% \begin{align}\label{eq: colluding server key matrix}
%     \begin{bmatrix}
%         \mathbf{B}_{1}\\
%         \mathbf{S}^{(1,v_1)}_2\mathbf{B}^{(1)}_2\\
%         \mathbf{S}^{(2,v_2)}_2\mathbf{B}^{(2)}_2
%     \end{bmatrix}_{7\times 5}.
% \end{align}
% The rank of~\eqref{eq: colluding server key matrix} is verified to be $5$, only providing 2 dimensions of length $\Lsf^\prime$ for decryption, which is the aggregated gradients. 
% Hence, server security is satisfied.

\end{example}

 \clearpage

\bibliographystyle{IEEEtran}
\bibliography{ref}

\newpage

\appendices

\section{Proof of Constraint~\ref{con: relay security} and~\ref{con: server security}}\label{appendix: constraint proof}

{\bf Proof of Constraint~\ref{con: relay security}.} We aim to prove that the matrix 
$$\Big[\mathbf{B}^{(u)}_2;\mathbf{S}^{\Tc^{(\Uc_{\overline{u}}(1))}}_2\mathbf{B}^{(\Uc_{\overline{u}}(1))}_2;\cdots;\mathbf{S}^{\Tc^{(\Uc_{\overline{u}}(\Usf-1))}}_2\mathbf{B}^{(\Uc_{\overline{u}}(\Usf-1))}_2\Big]$$
is full rank.
To this end, we first denote the key matrix in~\eqref{eq: sub-tasks}  as $$\mathbf{B}_1=\Big[\bv_{1};\cdots;\bv_{\Usf\frac{\msf}{\msf_1}}\Big],$$ where $\bv_{i}$ denotes the $i^{\text{th}}$ row of $\mathbf{B}_{1}$. The rank of $\mathbf{B}_1$ is $\sf U\frac{m}{m_1}-m$. Hence, among its rows, exactly $\sf U\frac{m}{m_1}-m$ vectors are linearly independent, while the remaining rows can be expressed as linear combinations of these independent vectors.

For each cluster $u\in[\Usf]$, the key coefficient matrix $\mathbf{B}^{(u)}_2$ consists of $(\Vsf_u-\ssf_2^{(u)})\frac{\msf}{\msf_1\msf_{2}^{(u)}}$ rows and can be written as
\begin{equation}
    \mathbf{B}^{(u)}_2=\begin{bmatrix}
        \bv_{(u-1)\frac{\msf}{\msf_1}+1}\\
        \vdots\\
        \bv_{u\frac{\msf}{\msf_1}}\\
        \bv^{(u)}_1\\
        \vdots\\
        \bv^{(u)}_{(\Vsf_u-\rsf^{(u)}_2)\frac{\msf}{\msf_1\msf^{(u)}_2}}
    \end{bmatrix}
\end{equation}
where the first $\frac{\msf}{\msf_1}$ rows are selected from $\mathbf{B}_1$, and the remaining $(\Vsf_u-\rsf^{(u)}_2)\frac{\msf}{\msf_1\msf^{(u)}_2}$ rows are 
generated such that each element is uniformly i.i.d. over $\mathbb{F}_\qsf$.

When relay $u$ colludes with a set of users $\Tc^{(i)}$ within any cluster $i \in \Uc_{\overline{u}}$, 
where $\Big|\Tc^{(i)}\Big|\leq \Tsf^{(i)}$, the relay can access at most $\Tsf^{(i)}\frac{\msf}{\msf_1\msf_{2}^{(i)}}$ linear combinations of the rows in $\mathbf{B}^{(i)}_2$ from the colluding users. Since $\Tsf^{(i)}\leq \Vsf_i-\rsf_2^{(i)}$, we have
\begin{equation}
    \Tsf^{(i)}\frac{\msf}{\msf_1\msf_{2}^{(i)}} \leq (\Vsf_i-\rsf^{(i)}_2)\frac{\msf}{\msf_1\msf^{(i)}_2}.
\end{equation}
Therefore, for each $\mathbf{B}^{(i)}_2$ with $i \in \Uc_{\overline{u}}$, the first $\frac{\msf}{\msf_1}$ row vectors are fully protected by the remaining $(\Vsf_i-\rsf^{(i)}_2)\frac{\msf}{\msf_1\msf^{(i)}_2}$ row vectors, which serve as masks whose elements are uniformly i.i.d. over $\mathbb{F}_\qsf$. As a result, any $\Tsf^{(i)}\frac{\msf}{\msf_1\msf_{2}^{(i)}}$ linear combinations of rows in $\mathbf{B}^{(i)}_2$ obtained by relay $u$ are linearly independent of the row vectors in $\mathbf{B}^{(u)}_2$.

Consequently, the matrix 
$$\Big[\mathbf{B}^{(u)}_2;\mathbf{S}^{\Tc^{(\Uc_{\overline{u}}(1))}}_2\mathbf{B}^{(\Uc_{\overline{u}}(1))}_2;\cdots;\mathbf{S}^{\Tc^{(\Uc_{\overline{u}}(\Usf-1))}}_2\mathbf{B}^{(\Uc_{\overline{u}}(\Usf-1))}_2\Big]$$
is full rank, which guarantees relay security against collusion.

{\bf Proof of Constraint~\ref{con: server security}.} We aim to prove that the matrix $$\Big[\mathbf{B}_{1};\mathbf{S}^{\Tc^{(1)}}_2\mathbf{B}^{(1)}_2;\cdots;\mathbf{S}^{\Tc^{(\Usf)}}_2\mathbf{B}^{(\Usf)}_2\Big]$$ has rank equal to $\Big(\Usf\frac{\msf}{\msf_1}-\msf+\sum_{u\in[\Usf]}\big|\Tc^{(u)}\big|\frac{\msf}{\msf_1\msf_2^{(u)}}\Big)$. 

Similarly, for each relay $u\in[\Usf]$, the presence of the row vectors $\biggl[\bv^{(u)}_1;\cdots;\bv^{(u)}_{(\Vsf_u-\rsf^{(u)}_2)\frac{\msf}{\msf_1\msf^{(u)}_2}}\biggl]$,
whose elements are uniformly i.i.d. over $\mathbb{F}_\qsf$, ensures that any linear combinations of key coefficients obtained from colluding users in cluster $u$ are independent of the row vectors of $\mathbf{B}_1$.  Therefore, the matrix
$\Big[\mathbf{B}_{1};\mathbf{S}^{\Tc^{(1)}}_2\mathbf{B}^{(1)}_2;\cdots;\mathbf{S}^{\Tc^{(\Usf)}}_2\mathbf{B}^{(\Usf)}_2\Big]$ has rank equal to $\Big(\Usf\frac{\msf}{\msf_1}-\msf+\sum_{u\in[\Usf]}\big|\Tc^{(u)}\big|\frac{\msf}{\msf_1\msf_2^{(u)}}\Big)$ and server security against collusion is satisfied.

\vspace{-1cm}
\section{Proof of Information Theoretic Security}\label{appendix:information theoretic proof}
\vspace{-0.2cm}
{\bf Relay Security.} Consider any relay $u\in[\Usf]$ and colluding user sets $\Tc^{(i)}\subset [\Vsf_i]$, where $i\in [\Usf]$ and $\big|\Tc^{(i)}\big|\leq \Tsf^{(i)}$. By~\eqref{eq: relay security}, we have
\vspace{-0.3cm}
\begin{subequations}
\begin{align}
    &I\Big(\{X_{u,v}\}_{v\in[\Vsf_u]};\{W_k\}_{k\in[\Ksf]}\Big|\{W_{\Tc},Z_{\Tc}\}\Big)\nonumber\\
    =&H\Big(\{X_{u,v}\}_{v\in[\Vsf_u]}\Big|\{W_{\Tc},Z_{\Tc}\}\Big)\nonumber\\
    &\qquad\qquad\quad-H\Big(\{X_{u,v}\}_{v\in[\Vsf_u]}\Big|Z_{\Tc},\{W_k\}_{k\in[\Ksf]}\Big)\label{eq: proof of relay security1}\\
    =&H\Big(\{X_{u,v}\}_{v\in\Mc_u}\Big|\{W_{\Tc},Z_{\Tc}\}\Big)\nonumber\\
    &\qquad\qquad\qquad\qquad\qquad-H\Big(\{Z_{u,v}\}_{v\in\Mc_u}\Big|Z_{\Tc}\Big)\label{eq: proof of relay security2}\\
    =&H\Big(\{X_{u,v}\}_{v\in\Mc_u\setminus\Tc^{(u)}}\Big|\{W_{\Tc},Z_{\Tc}\}\Big)\nonumber\\
    &\qquad\qquad\qquad\qquad-H\Big(\{Z_{u,v}\}_{v\in\Mc_u\setminus\Tc^{(u)}}\Big|Z_{\Tc}\Big)\label{eq: proof of relay security3}\\
    \leq& \Big(\Vsf_u-\ssf^{(u)}_2-\Tsf^{(u)}\Big)\frac{\Lsf}{\msf_1\msf^{(u)}_2}\nonumber\\
    &\qquad\qquad\qquad\qquad-H\Big(\{Z_{u,v}\}_{v\in\Mc_u\setminus\Tc^{(u)}}\Big|Z_{\Tc}\Big)\label{eq: conditional entropy}\\
    =&\Big(\Vsf_u-\ssf^{(u)}_2-\Tsf^{(u)}\Big)\frac{\Lsf}{\msf_1\msf^{(u)}_2}\nonumber\\
    &\qquad\qquad\qquad\qquad-\Big(\Vsf_u-\ssf^{(u)}_2-\Tsf^{(u)}\Big)\frac{\Lsf}{\msf_1\msf^{(u)}_2}\label{eq: relay key independence}\\
    =&0.\nonumber
\end{align}
% \begin{align}
%     &I\Big(\{X_{u,v}\}_{v\in[\Vsf_u]};\{W_k\}_{k\in[\Ksf]}\Big|\{W_{\Tc},Z_{\Tc}\}\Big)\nonumber\\
%     =&H\Big(\{X_{u,v}\}_{v\in[\Vsf_u]}\Big|\{W_{\Tc},Z_{\Tc}\}\Big)-H\Big(\{X_{u,v}\}_{v\in[\Vsf_u]}\Big|\ldots\nonumber\\
%     &\qquad\qquad\qquad\qquad\qquad\quad\quad\ldots,\{W_{\Tc},Z_{\Tc}\},\{W_k\}_{k\in[\Ksf]}\Big)\\
%     =&H\Big(\{X_{u,v}\}_{v\in[\Vsf_u]\setminus\Tc^{(u)}}\Big|\{W_{\Tc},Z_{\Tc}\}\Big)-H\Big(\{Z_{u,v}\}_{v\in[\Vsf_u]\setminus\Tc^{(u)}}\Big|\ldots\nonumber\\  
%     &\qquad\qquad\qquad\qquad\qquad\qquad\qquad\qquad\quad\quad\quad\ldots,\{Z_{\Tc}\}\Big)
% \end{align}
\end{subequations}
% \vspace{-0.1cm}
The transition from~\eqref{eq: proof of relay security1} to~\eqref{eq: proof of relay security2} follows from the fact that the messages of $\Vsf_u$ users lie in the vector space spanned by the messages of surviving set of users $\Mc_u$, where $|\Mc_u|=\Vsf_u-\ssf_2^{(u)}$. The first term in~\eqref{eq: conditional entropy} is because conditional reduces entropy and 
% $H\Big(\{X_{u,v}\}_{v\in[\Vsf_u]\setminus\Tc^{(u)}}\Big|\{W_{\Tc},Z_{\Tc}\}\Big)\leq \Big(\Vsf_u-\Tsf^{(u)}-\ssf^{(u)}_2\Big)\frac{\Lsf}{\msf_1\msf^{(u)}_2}$ since 
each $X_{u,v}$ contains $\frac{\Lsf}{\msf_1\msf^{(u)}_2}$ symbols and uniform distribution maximizes the entropy. The second term in~\eqref{eq: relay key independence} results from the independence of the non-colluding users' keys and the additional keys obtained through collusion, which is provided in Constraint~\ref{con: relay security}.

{\bf Server Security.} Consider any colluding set $\Tc^{(u)}$ within cluster $u\in[\Usf]$, where $\Tc\subseteq [\Vsf_u]$ and $\big|\Tc^{(u)}\big|\leq \Tsf^{(u)}$. Denote the matrix $\mathbf{B}_1\mathbf{N}$ in~\eqref{eq: sub-tasks} as $\Big[\nv_{1};\cdots;\nv_{\Usf\frac{\msf}{\msf_1}}\Big]$, where $\nv_{i}$ represents the $i^{\text{th}}$ row of $\mathbf{B}_1\mathbf{N}$. By~\eqref{eq: server security}, we have
\begin{subequations}
    \begin{align}
        &I\Big(\{Y_u\}_{u\in[\Usf]};\{W_k\}_{k\in[\Ksf]}\Big|\sum_{k\in[\Ksf]}W_k,\{W_{\Tc},Z_{\Tc}\}\Big)\nonumber\\
        =&H\Big(\{Y_u\}_{u\in[\Usf]}\Big|\sum_{k\in[\Ksf]}W_k,\{W_{\Tc},Z_{\Tc}\}\Big)\nonumber\\
        &\qquad\qquad\qquad\ -H\Big(\{Y_u\}_{u\in[\Usf]}\Big|\{W_k\}_{k\in[\Ksf]},Z_{\Tc}\Big)\\
        \leq& (\Usf\frac{\msf}{\msf_1}-\msf)\frac{\Lsf}{\msf}-H\Big(\nv_1,\ldots,\nv_{\Usf\frac{\msf}{\msf_1}}\big|Z_{\Tc}\Big)\label{eq: server recovery}\\
        =&(\Usf\frac{\msf}{\msf_1}-\msf)\frac{\Lsf}{\msf}-(\Usf\frac{\msf}{\msf_1}-\msf)\frac{\Lsf}{\msf}=0,\label{eq: server key independence}
    \end{align}
\end{subequations}
where in~\eqref{eq: server recovery}, we plug in the design of relays' transmissions, where $[Y_1;\cdots;Y_\Usf]=\left[
    \begin{NiceArray}{c|[tikz={dash pattern=on 4pt off 3pt}]c}
        \mathbf{F}_{1}  & \mathbf{B}_{1} 
    \end{NiceArray}
\right]\mathbf{W^\prime}$. Based on the decryption given by $\mathbf{S}^{-1}_1([\msf],.)\left[
    \begin{NiceArray}{c|[tikz={dash pattern=on 4pt off 3pt}]c}
        \mathbf{F}_{1}  & \mathbf{B}_{1} 
    \end{NiceArray}
\right]\mathbf{W^\prime}=\left[
    \begin{NiceArray}{c|[tikz={dash pattern=on 4pt off 3pt}]c}
        \mathbf{A}  & \mathbf{0} 
    \end{NiceArray}
\right]\mathbf{W^\prime}$, i.e., from $\msf$ linear combinations of row vectors in $\{Y_u\}_{u\in[\Usf]}$, we can recover $\sum_{k\in[\Ksf]}W_k$,
% is recovered by
the first term in~\eqref{eq: server recovery} is because each $Y_u$ contains $\Usf\frac{\msf}{\msf_1}$ row vectors of length $\frac{\Lsf}{\msf}$ and unifrom variables maximize entropy. From~\eqref{eq: server recovery} to~\eqref{eq: server key independence}, we apply the independence of keys at the server and the additional keys obtained through collusion, which is provided in Constraint~\ref{con: server security}.

\end{document}